\def\FastClock{~\emph{FastClock}}
\def\Nbd{\mathcal{N}}
\def\union{\bigcup}
\def\intersect{\bigcap}
\def\Binomial{\mathrm{Binomial}}
\def\Dis{\mathrm{Dis}}
\def\E{\mathbb{E}}
\def\CF{\mathcal{CF}}
\def\CCF{\mathcal{CCF}}
\def\symmdiff{\triangle}
\def\SEst{\tilde{S}}
\def\F{\mathbb{F}}
\def\Frontier{\F}
\def\N{\mathbb{N}}
\def\lunion{\cup}
\def\lintersect{\cap}
\def\C{\mathcal{C}}
\def\T{\mathcal{T}}
\def\R{\mathbb{R}}
\def\Z{\mathbb{Z}}
\def\FrontierDisc{\Delta\Frontier} 
\def\RDisc{\Delta R}
\newcommand{\TL}[1]{[[#1]]}
\newtheorem{assumption}{Assumption}
\title[ Network Cascade Temporal Scale Estimation ]{ Temporal Scale Estimation for Oversampled Network Cascades: Theory, Algorithms, and Experiments }
\begin{document}

\maketitle

\begin{abstract}
    Spreading processes on graphs arise in a host of application domains, 
    from the study of online social networks to viral marketing to epidemiology.
    Various discrete-time probabilistic models for spreading processes
    have been proposed. These are used for downstream statistical estimation
    and prediction problems, often involving messages or other information
    that is transmitted along with infections caused by the process.  It
    is thus important to design models of cascade observation that take
    into account phenomena that lead to uncertainty about the process
    state at any given time.
    We highlight one such phenomenon -- \emph{temporal
    distortion} -- caused by a misalignment between the rate at which observations
    of a cascade process are made and the rate at which the process itself operates, and argue that failure to correct for it results in degradation
    of performance on downstream statistical tasks.  
    To address these issues, we formulate the \emph{clock estimation} problem
    in terms of a natural distortion measure.
    We give a clock estimation algorithm, which we call FastClock, that runs in
    linear time in the size of its input and is provably statistically
    accurate for a broad range of model parameters when cascades are generated
    from the independent cascade process with known parameters and when
    the underlying graph is Erd\H{o}s-R\'enyi. 
    We further give empirical results on the performance of our algorithm
    in comparison to the state of the art estimator, a likelihood proxy maximization-based estimator implemented via dynamic programming.  We find that, in a broad parameter regime, our algorithm substantially outperforms the dynamic programming algorithm in terms of both running time and accuracy.
\end{abstract}

\begin{keywords}%
  independent cascade, spreading processes, estimation, contagion, diffusion, temporal resolution
\end{keywords}

\section{Introduction}

There are a variety of well-established and simple probabilistic generative models for graphs and 
and infectious processes that run over these graphs. In this work we specifically focus on models 
for spreading processes on networks such as the diffusion of 
innovation~\cite{montanari2010spread}, information~\cite{bakshy2012role} and 
misinformation~\cite{shin2018diffusion} in social networks. Accurate estimation of model 
parameters of such processes based on observational data is essential for a variety of important 
applications: from product marketing and social network recommendations to studying financial 
markets and detecting 
insurgent networks and limiting misinformation. At the same time, accurate modeling critically 
depends on our ability to account for major sources of uncertainty induced by the manner in which 
observational data about such evolving processes is acquired. 

\paragraph{Discrete-time diffusion process models.}
Several well-studied information diffusion models assume a discrete timeline in which at every time step nodes participate in the diffusion process (i.e., get ``infected'') based on influence from network neighbors who got infected in past time steps. For example, according to the \emph{independent cascade model}~\cite{KempeInfluenceMaximization} infected nodes have one chance to infect their neighbors, while in the \emph{linear threshold model} nodes get infected when a critical fraction of their neighbors have been infected in any prior time steps. 

It is important for our subsequent discussion to note what ``discrete-time'' means in the context 
of a process running in the real world, about which we would like to draw statistical inferences 
based on observations at potentially arbitrary physical time points.  We think of a diffusion
process as running in continuous time, so that, in principle, a vertex infection may occur at any 
$t\in \R$.  A discrete-time process model posits the existence of a sequence of (possibly random) \emph{time steps} $0 \leq \tau_0 < \tau_1 < ..., \tau_j \in [0, \infty)$, and specifies the probability distribution of the process state at each time $\tau_{j+1}$
conditioned on the state of the process at time $\tau_{j}$.  Each such conditional distribution is invariant to the actual values of the $\tau_j$.  In this sense, we can think of a discrete-time process model as a partial specification of a continuous-time process model whose state evolves according to a discrete-valued variable.

\paragraph{The need to account for temporal distortion.}
One major source of uncertainty
that is overwhelmingly overlooked in current literature is a misalignment
of time points at which we observe a discrete-time process trajectory with 
the time points at which the state variables governing the process evolve.
Here, an \emph{observation} of a process at a particular time consists
of the current state (infected or not) of every node.  The aforementioned
misalignment may be, for example, due to drawing observations at a higher
rate than that at which the cascade process itself operates.
This leads to what we call \emph{temporal distortion} in process 
observations.  Correcting for this distortion is the \textbf{main focus} of this paper.  We illustrate this phenomenon with a concrete example, 
Example~\ref{example:temporal-distortion}, that shows the deleterious 
effect of uncorrected temporal distortion on a downstream statistical 
estimation task.

\begin{example}[Temporal distortion affects downstream statistical inference]

    Consider a cascade generated by the \emph{independent cascade model}~\cite{KempeInfluenceMaximization}
    on a graph $G$ with $n$ vertices, with edge transmission parameter $p_n=1$ and probability of infection from an external source $p_e = 0$.  Assume that $G$ is a complete binary tree and that the infection starts at the root node.
    We recall that this model runs in discrete time, with physical timesteps
    $t_0 = 0 < t_1 < ... < t_N$, with $t_j \in \R$ for all $j$, producing
    infected vertex sets $S_j\subseteq [n] = \{1, ..., n\}$, for each 
    $j \in \{0, ..., N\}$.  That is, $S_j$ is the set of vertices infected
    in the physical time interval $(t_{j-1}, t_j]$.  In each time interval $(t_{j-1}, t_j]$, the set of \emph{active} vertices (those vertices that can transmit infections across edges) is $S_{j-1}$.  Let us suppose that the infection times of vertices
    infected in a given time interval are uniformly distributed in that interval.
    For this example, we choose 
    physical observation times $t'_0=0, t'_1, ..., t'_{2N+1}$ with
    $t'_{2j} = t_j$ and $t'_{2j+1} = \frac{t_{j} + t_{j+1}}{2}$ for each
    $j$.
    Thus, our view of the cascade consists of a sequence of infected sets $\hat{S}_j$, $j \in \{0, 1, ..., 2N+1\}$.  
    
    Consider the problem of \emph{doubling time} prediction: given cascade
    observations up to/including a time $t \in \R$ in which $m$ vertices are infected, the task is to predict
    an interval $[a, b]$ such that, with probability at least $1-\delta$,
    for some fixed parameter $\delta > 0$, the physical time of the $2m$-th infection event lies in $[a, b]$.
    
    If temporal distortion is \emph{not} accounted for, so that we incorrectly 
    assume that the process timesteps occur at times $t'_j$, we have an 
    inaccurate knowledge of the set of active vertices at any given time.
    This has the following effect on doubling time prediction at
    times $t = t'_{2j+1}$: at these times, approximately $\sum_{k=0}^{2j} 2^k + 2^{2j+1} = 2^{2j+1} + \frac{2^{2j+1} - 2^{0}}{2j+1-0} = \frac{2^{2j+1} - 1}{2j+1} + 2^{2j+1} = 2^{2j+1} \cdot \left(1 + O(1/j) \right)$ vertices are infected, and we believe that approximately $2^{2j+1}$ vertices are active (when, in fact, only $O(2^{2j}/j) = o(2^{2j+1})$ vertices are active).  We would thus predict that the doubling time is exactly $t'_{2(j+1)}$, despite
    the fact that the number of vertices infected at this time is
    exactly $\sum_{k=0}^{2(j+1)} 2^k = \frac{2^{2(j+1)+1}-2^{0}}{2j+2} = \frac{2^{2j+3} - 1}{2j+2} = \frac{2^{2j+2}-1}{j+1} \ll 2^{2j+2}$.  Thus,
    failure to account for temporal distortion in this setting leads to 
    substantial and, in this setting, avoidable inaccuracy.
    
    More realistic empirical experiments in~\cite{ditursi2017network,ditursi2019optimal} confirm
    that accounting for temporal distortion can, in practical settings,
    improve performance on doubling time prediction and several other
    downstream statistical tasks.
    \label{example:temporal-distortion}
\end{example}

More generally, temporal distortion degrades statistical performance
on problems where model parameters are dependent on knowledge of the
infectious sets of vertices (the so-called \emph{active} vertices mentioned in the example) at given times.  Correction for temporal
distortion, which is the main focus of the present paper, is thus
an important problem.

\paragraph{Prior work.}
The general topic of analysis of cascades has received a large amount
of attention, both from theoretical and empirical perspectives.
There are many cascade models, with features depending on application
domains.  E.g., the independent cascade (IC) and linear threshold (LT)
models were popularized in~\cite{KempeInfluenceMaximization} for the
application of \emph{influence maximization}.  This problem continues to be studied~\cite{AbbeInfluence1,AbbeInfluence2}.
Variations on the influence maximization problem that have
time-critical components and, thus, may be sensitive to temporal
perturbation in the sense that we study here, have also been
studied~\cite{TimeCritical,ali2019fairness}.
These models are
also used outside the context of influence maximization, e.g., in the modeling of the spread of memes on social networks~\cite{AdamicMemes}. 

Statistical prediction tasks involving cascades have also been
asked.  For instance, the cascade doubling time prediction task was considered in~\cite{CanCascadesBePredicted}.  Other works
propose models in which a piece of information, such as a message, an opinion, or a viral genome, is transmitted
along with the infection of a node~\cite{PoorViralEvolution,LearningForecastingOpinions,GenerationIntervals}.
For such statistical problems, statistical inferences about
the transmitted information can be disrupted by inaccurate estimation of the set of infectious vertices at a given time, further motivating the consideration of methods for correcting
for temporal distortion.

In~\cite{ditursi2017network} (see also followup work 
in~\cite{ditursi2019optimal}), the authors formulated a version of the 
problem of clock recovery (equivalent to temporal distortion correction studied here)
from adversarially temporally perturbed cascade data as a problem of maximization of
a function of the clock that serves as a proxy (in particular, an upper bound) for the log likehood of the observed
cascades.  They proposed a 
solution to this problem via a dynamic programming algorithm.  While 
the dynamic
programming algorithm is an exact solution to their formulation of the problem,  
it has a running time of $\Theta(n^4)$, where
$n$ is the total number of vertices in the graph on which the observed
cascade runs, which is
prohibitively expensive for graphs of moderate to large size.  Furthermore, 
their formulation of the problem makes no comparison with the ground truth clock,
and thus there are no theoretical guarantees or empirical evaluations of the accuracy of their estimator (which we call the \emph{maximum likelihood proxy (MLP) estimator}) as an approximation to the
ground truth clock.  In contrast, the present work gives a rigorous formulation of the
problem as one of statistical estimation of the ground truth clock from observed cascades.
We compare our proposed algorithm and estimator with the MLP estimator in this framework
in terms of both accuracy and running time.

\paragraph{Our contributions}

In the present work, we propose an approximation formulation of the clock recovery
problem, allowing us to quantify the proximity of estimated clocks to the ground
truth in a principled manner.  Our formulation is general, covering arbitrary varying observation rates. However, our algorithms, theorems,
and experiments are specific to the \emph{oversampling} case, wherein observations
are made at a higher rate than that at which the spreading process operates.  We
leave estimators for the more complicated general case to future work.

We propose a novel estimation algorithm, which we call
FastClock, that runs in time linear in the size of the cascade.  We rigorously prove
that,
under natural conditions on the input graph and cascade parameters, the FastClock
estimator produces a clock whose distance to the ground truth is vanishingly small
as the size of the graph tends to infinity.  Our guarantees on FastClock
hold for a broad range of the parameter of the Erd\H{o}s-R\'enyi graph 
model. 

We bolster our theoretical results via experiments on synthetic graphs
and cascades.  We find that the FastClock estimator empirically outperforms
the dynamic programming-based estimator from~\cite{ditursi2017network} in these experimental conditions in terms of accuracy and 
running time.

\paragraph{Organization of the paper}
In Section~\ref{sec:problem-formulation}, we give a precise formulation of the problem and introduce notation.  In Section~\ref{sec:main-results}, we state the FastClock algorithm and the main
theoretical results.  We give proof sketches (and, where noted, full proofs) in
Section~\ref{sec:proof-sketches}.  Section~\ref{sec:empirical} gives empirical results comparing
FastClock and the DP algorithm implementing the MLP estimator.   We conclude in 
Section~\ref{sec:conclusions}.  Full proofs
of all results are given in the appendix.

\section{Problem formulation and notation}
\label{sec:problem-formulation}

Our goal in this section is to formulate the problems of \emph{clock estimation} and \emph{spreading process history reconstruction}
from a temporally perturbed cascade observation.  As mentioned in the introduction,
our formulation is quite general and covers temporal distortion arising from
arbitrarily varying observation rates.  Since this general case is algorithmically
and statistically more complicated (in particular, while our proposed algorithm succeeds at clock estimation, the more relevant problem of history reconstruction is more difficult), we then focus on the oversampling case.  In this case,
our general problem formulation can be replaced by a simpler one,
and the two problems of clock estimation and spreading process history reconstruction become equivalent.

\subsection{General formulation}

We fix a graph $G$
on the vertex set $[n] = \{1, ..., n\}$, and we define the \emph{timeline of
length $N$}, for any number $N \in \N$,
to be the set $\TL{N} = \{0, 1, ..., N\}$.
The first ingredient of our framework is a \emph{cascade model}.
\begin{definition}[Cascade model]
    A (discrete-time) cascade model $\C(N)$ is a probability distribution 
    on sequences
    $(S_0, S_1, ..., S_N)$ of disjoint subsets of vertices of $G$.
    We think of $S_t$, $t \in \{0, ..., N\}$ as the set of vertices
    infected in timestep $t$.  We call any such sequence
    an \emph{infection sequence}, and we write $|S| = N+1$.
\end{definition}
As mentioned in the introduction, we think of a discrete-time cascade as running in
continuous, physical time, so that the $j$th timestep begins at some physical time $t_{j-1}$
and ends at physical time $t_j$, and every vertex $v \in S_j$ is infected at some
physical time in the interval $(t_{j-1}, t_j]$.  Note, however, that physical times
are not formally part of the logical framework, and our models have no explicit dependence on them.  
We introduce them only to aid intuition.

Next, we define our observation model, which formalizes our notion of
temporal perturbations.  To do this, we need the notion of a \emph{clock}.
Intuitively, a clock encodes the number of \emph{observations} of the cascade
made during each cascade timestep.  For us, an observation of
a cascade at some physical time $t$ consists of the set of nodes that have been infected at or before time $t$.  We will talk about the $k$th observation to occur, $k\geq 0$, as having
\emph{index} $k$.

\begin{definition}[Clock]
    A \emph{clock} $C$ on a timeline $\TL{N}$ is a map $C:\TL{N}\to\Z^{\geq 0}$.
    Equivalently, it is a tuple of $N+1$ non-negative integers $(C(0), ..., C(N))$,
    where $C(j)$ intuitively gives the number of observations made in the physical time interval
    $(t_{j-1}, t_j]$.  The \emph{size} $|C|$ of $C$ is given by 
    \begin{align}
        |C| = \sum_{j=0}^N C(j).
    \end{align}
    \label{def:general-clock}
\end{definition}
It will be convenient to introduce more notation regarding clocks:
\begin{itemize}
    \item
        For a clock $C$, let the $j$th partial sum of $C$ be given by $\sum_{k=0}^j C(k)$, and 
        denote it by $C(0:j)$.  This is the number of observations made prior to the $j$th
        cascade timestep.
    \item
        Let $M_{C}:\TL{N}\to 2^{\TL{N'}}$ be given as follows: 
        $M_{C}(j) = \{ C(0:j-1)+1, ..., C(0:j)  \}$.
        In other words, $M_{C}(j)$ is the set of observation indices that occur during the time 
        interval $(j-1, j]$, according to $C$.
\end{itemize}

The following definition captures the notion of an infection sequence $S'$ that could arise
from observing a ground truth infection sequence $S$ according to a schedule dictated
by a clock $C$.
\begin{definition}[Clock-consistent observation of an infection sequence]
    Fix two infection sequences $S = (S_0, ..., S_N)$ and $S' = (S'_0, ..., S'_{N'})$
    and a clock $C$ on $\TL{N}$ with size $|C| = N'$.  We say that $S'$ is an observation
    of $S$ that is consistent with $C$ if,
    for each ground truth timestep $t \in \TL{N}$,
    \begin{align}
        S_t
        = \union_{t' \in M_C(t)} S'_{t'}.
    \end{align}
    In other words, $S'_{j}$ can be interpreted as encoding $j$th observation of the infection 
    sequence given by $S$, according to the schedule dictated by $C$.
\end{definition}
As an easy consequence of this definition, if $S'$ is an observation of $S$ consistent with
any clock $C$, then $C$ is the unique clock for which this is true.

\begin{example}[Infection sequences, clocks, clock-consistent observations]
    Consider a graph $G$ on the vertex set $[n] = [10]$.  One possible infection
    sequence on $G$ is 
    \begin{align}
        S
        = (S_0, S_1, S_2) 
        = (\{2, 8, 10 \}, \{ 1, 3, 4, 7, 9 \}, \{ 6 \} ).
    \end{align}
    This encodes a sequence of infections occurring in three timesteps --
    i.e., on the timeline $\TL{2}$.
    In particular, we may think of $S_1$ as encoding that vertices
    $1, 3, 4, 7, 9$ are all infected during timestep $1$, but the
    order in which they are infected is not encoded.
    
    One possible example clock on the timeline $\TL{2}$ is
    $ %
        C
        = (C_0, C_1, C_2)
        = (0, 4, 1).
    $ %
    This encodes that $0$ observations are made in timestep $0$,
    $4$ are made in timestep $1$, and $1$ is made in timestep $2$.
    
    An example infection sequence $\hat{S}$ that is an observation of $S$
    consistent with $C$ is as follows:
    \begin{align}
        \hat{S} = (\hat{S}_0, \hat{S}_1, \hat{S}_2, \hat{S}_3, \hat{S}_4)
        = (\{ 2, 7, 8, 10 \}, \{ 1, 9 \}, \{ 3, 4 \}, \{ \}, \{ 6 \})
    \end{align}
    Note that $\hat{S}$ is necessarily an infection sequence on the
    timeline $\TL{0 + 4 + 1 - 1} = \TL{4}$.
\end{example}

We finally come to the definition of a temporal distortion model.
\begin{definition}[Temporal distortion model]
    A temporal distortion model is a conditional probability distribution $P(\cdot ~|~ S)$ on infection sequences, parameterized by infection sequences $S$ (which we think of as being the ground truth infection sequences), such that
    $P(S' ~|~ S) > 0$ only if $S'$ is an observation of $S$ consistent with some clock.
\end{definition}
In other words, a temporal distortion model is a probabilistic generative model for
observations of an infection sequence.

\subsection{Specialization to the oversampling regime}

\textbf{In this work, we will focus without much further comment on \emph{oversampling} temporal distortion models,}
which are models resulting in observations according to clocks with $C(j) > 0$ for all $j$.
Intuitively, this covers the case where observations are made at a higher rate than that
at which the process itself operates.  In the oversampling regime, we can simplify the
definition of a clock:
\begin{definition}[Clock (oversampling case)]
    A(n oversampling) clock $C$ on the timeline $\TL{N}$ with size $N'+1$ is a partition of $\TL{N'}$
    into $N+1$ subintervals.  We call the $j$th such subinterval,
    for $j=0$ to $N$, the \emph{$j$th observation interval}.
    \label{def:clock-oversampling}
\end{definition}
In the above definition, we think of $\TL{N}$ as the ground truth timeline and
$\TL{N'}$ as the observation timeline.  An oversampling clock partitions the observation timeline
into subintervals, each corresponding to a single ground truth timestep.  

\begin{example}[Oversampling clock]
    Consider the timeline $\TL{5}$ (here, $N' = 5$).  An example of an oversampling clock is
    \begin{align}
        C = \{ [0, 2], [3, 3], [4, 5] \}.
    \end{align}
    This is equivalent to the following clock on $\TL{2}$, with $N=2$, in the
    sense of Definition~\ref{def:general-clock}:
    \begin{align}
        C = (3, 1, 2).
    \end{align}
\end{example}

An infection sequence $S$ naturally induces a partial order on the set of vertices: namely,
for two vertices $a, b$, $a < b$ if and only if $a \in S_i, b \in S_j$ for some
$i < j$.  Similarly, a clock on an infection sequence, in the sense of Definition~\ref{def:clock-oversampling}, induces a partial order.

We will consider two clocks $C_0, C_1$ to be equivalent with respect to a given observed infection sequence $S$ if they induce the same partial order.  The reason
for this is that two equivalent clocks separate vertices in the same way into a
sequence of time steps.  We will sometimes abuse terminology and use ``clock''
to mean ``clock equivalence class''.

We next define a distortion function on clock equivalence classes.  This will allow
us to measure how far a given estimated clock is from the ground truth.
Note that given an observed infection sequence $S$, a clock cannot reverse the order
of any pair of events, so that the standard Kendall $\tau$ distance
between partial orders is not appropriate here.  
\begin{definition}[Distortion function on clock pairs]
\label{def:dist}
    Consider two clocks $C_0, C_1$ with respect to an observed infection 
    sequence $S$.
    We define $\Dis_{C_0,C_1}(i,j)$ to be the indicator that the clocks 
    $C_0$ and $C_1$ order vertices $i$ and $j$ differently (i.e., that
    the partial order on vertices induced by $C_b$ orders $i$ and $j$
    and the partial order induced by $C_{1-b}$ does not, for $b$ equal
    to either $0$ or $1$).
    If the clocks in question are clear from context, we may drop the 
    subscript.
    
    We define the following distortion measure on clock pairs:
    \begin{align}
        d_S(C_0, C_1)
        = \frac{1}{{n\choose 2}} \sum_{i < j} \Dis_{C_0, C_1}(i, j).
    \end{align}   
\end{definition}

We finally come to the general problem that we would like to solve:
\begin{definition}[Clock estimation/Spreading process history reconstruction]
    Fix a graph $G$, a cascade model $\C(G, T)$, and an oversampling temporal distortion model $\T$.  An infection sequence $S \sim \C(G, T)$ is generated on $G$.  Finally, an observed infection sequence
    $\hat{S}$ with $|\hat{S}| = N+1$ is generated according to $\T$, with implicit clock $C$.  Our goal is to produce an estimator $\hat{C} = \hat{C}(\hat{S})$ of $C$ so
    as to minimize $\E[d_{\hat{S}}(C, \hat{C})]$.  This is called the \emph{clock estimation problem}.
    
    We
    call the problem of estimating $S$ the \emph{spreading process history reconstruction problem}.  An estimated oversampling clock $\hat{C}$ induces an estimate $\SEst$ of $S$,
    so that clock estimation and spreading process history reconstruction are equivalent
    in the oversampling case.
\end{definition}
We note that the above definition implicitly assumes knowledge of the parameters of
the cascade model.  Estimation of these parameters has been studied in the literature.
Furthermore, we note that knowledge of the initial conditions of the cascade is necessary
in order to achieve an expected estimation error that tends to $0$ in general.  We thus assume
that the number of initially infected vertices is given to us. Under mild additional
assumptions on the model (e.g., that $S_0$ consists of $\Theta(1)$ vertices chosen uniformly at random, and that the graph is sparse, so that $S_0$ is an independent set with high probability), the initial set $S_0$ can be inferred with high probability.

\paragraph{Specific cascade models.}
Having laid out the general framework for temporal distortion models, we specify a few 
example cascade models for our problem.  Our approach generalizes beyond these two,
as we will explain after the statement of our algorithm.

We first define the \emph{independent cascade (IC)} process.  We fix a graph $G$,
an initial infection set $S_0$ of vertices in $G$ (given by elements of
$[n] = \{1, ..., n\}$), and probability parameters 
$p_n$ and $p_e$.  Here, $p_n$ denotes the probability of transmission of an
infection across an edge, and $p_e$ denotes the probability of infection from
an external source.

Step $j+1$ of the IC process proceeds as follows: for each node $v$ in $S_j$
and each uninfected neighbor $w$, $v$ attempts to infect $w$, succeeding with
probability $p_n$, independent of anything else.  Next, each uninfected node
is independently infected with probability $p_e$.  The set of nodes infected
in step $j+1$ is denoted by $S_{j+1}$.
The process terminates either after a specified number of steps, after all
nodes are infected, or when $S_{j+1} = \emptyset$ and $p_e = 0$.

The \emph{linear threshold (LT)} process works as follows: for every node $v$ in $G$, a threshold
$\theta_v$ is drawn independently from some known distribution on $[0, 1]$.  Some initial subset $S_0$
of vertices is infected, and, in each subsequent timestep, vertex $v$ is infected if
either it has already been infected or the fraction of its neighbors that are infected
exceeds $\theta_v$.

\section{Main results: Algorithm, approximation and running time guarantees}
\label{sec:main-results}

In this section, we present our proposed algorithm (Algorithm~\ref{alg:FastClock}) for clock 
estimation, which we call\FastClock. It takes as input a graph $G$, an observed infection 
sequence $\hat{S} = (\hat{S}_0, ..., \hat{S}_N)$, and the parameters $\theta$ of the 
cascade model, including the initial infection set $S_0$ (see our discussion of this 
assumption in the previous section).  The output of the algorithm is an estimated clock
$\hat{C}$, which takes the form of a sequence of interval endpoints $\hat{t}_0, \hat{t}_1, ..., \hat{t}_{\hat{N}} \in \TL{N}$, for some $\hat{N}$ and is an estimate of the ground truth clock $C$ specified by $t_0, ..., t_N$.

Our algorithm proceeds by iteratively computing the estimate $\hat{t}_{j}$.  In the
$(j+1)$-st iteration, to compute $\hat{t}_{j+1}$, 
it chooses the size of the next interval of the clock
so as to match as closely as possible the expected
number of newly infected nodes in the next timestep.  We will 
prove that the resulting clock estimate is very close, in terms 
of $d_{\hat{S}}(\cdot, \cdot)$, to the ground truth clock, using 
concentration inequalities.

In particular, the correctness of FastClock is based on the following intuition: if we manage to correctly
estimate $t_0, ..., t_j$, then we can estimate the conditional expected number of
vertices infected in the $(j+1)$-st timestep of the process (i.e., $|S_{j+1}|$).  We can
show a conditional concentration result for $|S_{j+1}|$ around its expectation.
Thus, we output as our next clock interval endpoint $\hat{t}_{j+1}$ the smallest integer
$t \geq \hat{t}_{j}$ for which the number of vertices in $\union_{k=t_{j}+1}^{t} \hat{S}_{t}$
does not exceed its conditional expectation, corrected by a small quantity.  This 
quantity is determined by the concentration properties of the random variable $|S_{j+1}|$
conditioned on the state of the process given by $\hat{S}_0, ..., \hat{S}_{j}$.  We choose
the threshold to be such that, under this conditioning, the number of vertices infected
in the next process timestep is slightly less than it with probability tending exponentially
to $1$.
Our approximation analysis illustrate that the
approximation quality depends on the graph structure and the model parameters.

The significance of the approximation and running time results (Theorems~\ref{thm:fastclock-approx-main} and~\ref{thm:fastclock-running-time} below) is that oversampling temporal distortion under natural conditions can be quickly corrected for
with provably high accuracy using relatively simple expected value calculations.  While our approximation theorem is formally stated for the IC model, the conclusions hold as long as the number of infected nodes
in the next cascade timestep, conditioned on the current state of
the process, is well-concentrated and as long as the
\emph{expected} number of such nodes is immune to small
errors in the estimation of the process state.  These
are both functions of the cascade model and of the 
structure of the graph $G$ on which the cascade runs: our results
hold when the graph is an expander with appropriate parameters
(which is implied by our Erd\H{o}s-R\'enyi stipulation in the approximation theorem).

As long as the
expected number of nodes infected in the next timestep can be calculated efficiently,
the FastClock algorithm can be adapted to a wide variety of cascade models.

\subsection{The FastClock algorithm}
Before we define our algorithm we introduce some necessary notation. For an infection sequence $\SEst$ and a timestep index $t \in |\SEst|$, define $\sigma_t(\SEst)$ to be the $\sigma$-field
generated by the event that the first $t$ infection sets of
the cascade process are given by $\SEst_0, \SEst_1, ..., \SEst_t$.  That is, the event in question is that
$S_0 = \SEst_0, ..., S_t = \SEst_t$.
We also define $\mu_{t}(\SEst)$ to be
$ %
\mu_t(\SEst)
= \E[ |S_{t+1}| ~|~ \sigma_t(\SEst)].
$ %
The algorithm is given in Algorithm~\ref{alg:FastClock}.

\begin{algorithm2e}[th]
\label{alg:FastClock}
\caption{FastClock}
\KwData{Graph $G$, cascade model parameters $\theta$, observed infection sequence $\hat{S} = (\hat{S}_0, ..., \hat{S}_N)$}
\KwResult{An estimated clock $\hat{C}$.} 
\Comment{An initially empty list for the estimated clock.} 
Set $\hat{C} = ()$;

\Comment{$t$: index of the next estimated clock interval, i.e., $t$ is an index in $S$, the un-distorted infection sequence.  \\
$t_{obs}$: the index in $\hat{S}$ of the beginning of the next estimated clock interval %
}
Set $t=1, t_{obs}=\min\{j \leq N ~:~ |\lunion_{k=0}^j \hat{S}_k| = S_0\}$\;

\Comment{$\SEst$: the estimated infection sequence approximating the ground truth sequence $S$.}
Set $\SEst_0 = \lunion_{k=0}^{t_{obs}} \hat{S}_k$\;

Append $t_{obs}$ to $\hat{C}$\;

\While{$t_{obs} \neq N$}{
\Comment{Compute the expected number $\mu_t$ of infected nodes %
in a single step of the cascade process, starting
from the state of the process estimated so far.}
Set $\mu_t = \E[ | S_{t+1} | ~|~ \sigma_{t}(\SEst_0, \SEst_1, ..., \SEst_{t}) ]$\;

Set    
\begin{align}
t'_{obs}
= t_{obs} + \max\left\{\Delta ~|~  \sum_{i=t_{obs}+1}^{t_{obs+\Delta}} |\hat{S}_i|  \leq \mu_t \cdot (1 + \mu_t^{-1/3}) \right\}\;
\end{align}

Append $t'_{obs}$ to $\hat{C}$\;

Set $\SEst_t = \lunion_{i=t_{obs}+1}^{t'_{obs}} \hat{S}_i$\;

Set $t = t+1$\;

Set $t_{obs} = t'_{obs}$\;
} 

\Return{$\hat{C}$}\;
\end{algorithm2e}

After an initialization, the main loop in \FastClock~  (Steps 5-11) iteratively determines the first infection event in the next step of the process, by estimating the expected number of nodes $\mu_t$ to be infected next (Step 6). The key step in this process is the computation of $\mu_t$ which we discuss next.   

\paragraph{Computing $\mu_t(\SEst)$ in the IC model.}
Let us be more precise in specifying how to compute $\mu_t(\SEst) = \E[|S_{t+1}| ~|~ \sigma_{t}(\SEst)]$ in the independent cascade model.  A node can be infected in
one of two ways: through external factors (governed by $p_e$) or via transmission from
a vertex in $\SEst_t$ through an edge.  In the latter case, the node must lie in the \emph{frontier set}
$\Frontier_t(\SEst)$, defined as follows: $\Frontier_t(\SEst)$ is the
set 
$ %
    \Frontier_t(\SEst)
    = \Nbd(\SEst_t) \setminus \union_{j=0}^{t} \SEst_t;
$ %
i.e., it is the set of neighbors of $\SEst_t$ that we believe
to be uninfected at the beginning of cascade timestep $t$.

For a set of vertices $W \subseteq [n]$ and a vertex $v\in[n]$, let $\deg_W(v)$ denote the number of edges incident on $v$ that
are also incident on vertices in $W$.
We have, by linearity of expectation,
\begin{align}
\mu_t(\SEst) 
&= \sum_{v \in [n]} \Pr[\text{$v$ gets infected at time $t$} ~|~ \sigma_{t}(\SEst)] \\
&= \sum_{v \notin \Frontier_t}  \Pr[\text{$v$ gets infected at time $t$} ~|~ \sigma_{t}(\SEst)]
+ \sum_{v \in \Frontier_t }\Pr[\text{$v$ gets infected at time $t$} ~|~ \sigma_{t}(\SEst)] 
\\
&= \sum_{v\notin \Frontier_t \lunion \union_{j=0}^t \SEst_j}  \Pr[\text{$v$ gets infected at time $t$} ~|~ \sigma_{t}(\SEst)] +  \sum_{v \in \Frontier_t }\Pr[\text{$v$ gets infected at time $t$} ~|~ \sigma_{t}(\SEst)] \\
&= p_e \cdot \left(n - |\Frontier_t| - \sum_{j=0}^t |\SEst_j|\right) + \sum_{v\in \Frontier_t} (p_e + (1-p_e)(1 - (1 - p_n)^{\deg_{\SEst_t}(v)})).
\label{expr:MuTFormula}    
\end{align}
A similar expression can be derived for the more general case
where transmission probabilities across edges may differ from each other.

The calculation of the summation $\sum_{j=0}^t |\SEst_j|$ can be 
performed efficiently by keeping track of its value in the
$t$-th iteration of the loop of the algorithm.  In the $t$-th
iteration, the value of the summation is updated by adding
$|\SEst_t|$ to the running total. Note also that this estimation will be the only difference in our algorithm when applied to alternative cascade models such as the linear threshold model. 

\subsection{Approximation guarantee for FastClock}
Our first theorem gives an approximation guarantee for FastClock in the
case of the IC model.  It is
subject to a few assumptions about the temporal distortion model, the parameters of the cascade model and those of the graph model from which $G$ is sampled, which we state next. It is, however, important to note that FastClock itself does not assume anything about the graph. 

\begin{assumption}[Assumption on the temporal distortion model]
    \label{asmpt:observation-rate}
    No observed infection set $\hat{S}_i$ has too few vertices compared to
    the ground truth infection set $S_j$ from which it came.  In particular, this
    means that, for all $i$, the width of each observation interval $C_i$
    is bounded above by a constant, and there is some
    absolute constant $\epsilon > 0$ such that,
    for each $j \in C_i$, 
    \begin{align}
        |\hat{S}_j| \geq \epsilon \sum_{k \in C_i} |\hat{S}_k|.
    \end{align}
    Note that
    we do not assume anything else about the distribution of vertices in
    these observation intervals.  Furthermore, this assumption can be 
    somewhat relaxed to hold with high enough probability.
\end{assumption}

\begin{assumption}[Assumptions on random graph model parameters]
    \label{asmpt:model-params}
    We assume that $G \sim G(n, p)$ (i.e., that $G$ is sampled from the Erd\H{o}s-R\'enyi model), where $p$ satisfies the following
    relation with the ground truth number of cascade timesteps $T$:
    $ %
        p = o(n^{-\frac{T}{T+1}})
    $ %
    and
    $ %
    p \geq C \log n / n,
    $ %
    for some $C > 1$.
    The former condition may be viewed as a constraint on $T$, for
    a given choice of $p$.  It is natural in light of the fact that,
    together with our assumptions on $p_n$ and $p_e$ below,
    it implies that the cascade does not flood the graph, in the
    sense of infecting a $\Theta(1)$-fraction of nodes.  Many cascades
    in practice do not flood the graph in this sense.
    
    The latter condition implies that the graph is connected with
    high probability.

    Regarding parameters of the IC process, 
    we assume that $p_n$ is some fixed positive constant and
    that $p_e = o(p)$.  Our results also hold if $p_n$ is different
    for every edge $e$ (so that $p_n = p_n(e)$), provided that there
    are two positive constants $0 < c_0, c_1 < 1$ such that for
    every edge $e$, $p_n(e) \in [c_0, c_1]$.
    
    The assumption that $p_n$ is constant with respect to $n$ is 
    natural in the sense that, for many infectious processes,
    the probability of transmission from one node to another
    should not depend on the number of nodes.

    The assumption on $p_e$, the probability of infection from
    an external source, is reasonable when the cascade is overwhelmingly
    driven by network effects, rather than external sources.
\end{assumption}

\begin{theorem}[Main FastClock approximation theorem]
    Suppose that Assumptions~\ref{asmpt:observation-rate} and
    ~\ref{asmpt:model-params} hold.
    We have, with probability at least $1 - e^{-\Omega(np)}$,
    \begin{align}
        d_{\hat{S}}(C, \hat{C}) = O( (np)^{-1/3} ), %
    \end{align}
    where we recall that $S$ is an infection sequence generated
    by the cascade model, $C$ is the ground truth clock, $\hat{S}$
    is the observed infection sequence generated by the temporal
    distortion model, and
    $\hat{C}$ is the output of FastClock.
    \label{thm:fastclock-approx-main}
\end{theorem}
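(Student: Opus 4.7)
The plan is to argue by induction on the cascade timestep $t$, maintaining the invariant that after the $t$-th iteration of the main loop of \FastClock~the estimate $\SEst_j$ coincides exactly with the ground truth $S_j$ for all $j \leq t$. Since this invariant implies that the partial orders on vertices induced by $C$ and $\hat{C}$ are identical, it gives $d_{\hat{S}}(C, \hat{C}) = 0$ on the good event, comfortably within the $O((np)^{-1/3})$ claim. The base case $\SEst_0 = S_0$ follows from the initialization step, using the assumption that $|S_0|$ is known and the fact that in the oversampling model each observation $\hat{S}_i$ is contained in a single ground-truth infection set.

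For the inductive step, if $\SEst_j = S_j$ for all $j \leq t$, then expression~\eqref{expr:MuTFormula} yields the exact conditional mean $\mu_t = \E[|S_{t+1}| \mid \sigma_t]$. Conditionally on $\sigma_t$ and on the graph $G$, the random variable $|S_{t+1}|$ is a sum of independent vertex-indexed Bernoulli indicators (each vertex $v \notin \bigcup_{j\le t} S_j$ is infected in step $t+1$ via an independent collection of edge transmissions and external events), so a multiplicative Chernoff bound gives
\begin{align}
\Pr\!\left[\,\bigl||S_{t+1}| - \mu_t\bigr| > \mu_t^{2/3} \,\middle|\, \sigma_t\right] \leq 2\exp(-\Omega(\mu_t^{1/3})).
\end{align}
Under Assumption~\ref{asmpt:model-params} and standard $G(n,p)$ degree concentration (which holds with probability $1 - e^{-\Omega(np)}$ under $p \geq C\log n/n$), one verifies by linearizing $1 - (1-p_n)^{\deg_{\SEst_t}(v)} \approx p_n \deg_{\SEst_t}(v)$ (valid because $p|S_t| = o(1)$ throughout by the condition $p = o(n^{-T/(T+1)})$) that $\mu_t \asymp p_n \cdot np \cdot |S_t|$, $|S_t| \asymp (np)^t$, and in particular $\mu_t = \Omega(np)$ uniformly in $t$.

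On the concentration event I would then verify that the algorithm's choice of $\Delta$ equals $|C_{t+1}|$ exactly. The choice $\Delta = |C_{t+1}|$ gives cumulative sum $|S_{t+1}| \leq \mu_t(1 + \mu_t^{-1/3})$, so $\Delta \geq |C_{t+1}|$. Any larger choice would absorb the first observation of $C_{t+2}$, which by Assumption~\ref{asmpt:observation-rate} and the lower-tail concentration of $|S_{t+2}|$ contributes at least $\epsilon |S_{t+2}| = \Omega(\epsilon \cdot np \cdot \mu_t)$, strictly exceeding the residual slack $\mu_t(1 + \mu_t^{-1/3}) - |S_{t+1}| \leq 2\mu_t^{2/3}$ whenever $np \to \infty$. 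Hence no observation from $C_{t+2}$ is included and $\SEst_{t+1} = S_{t+1}$, closing the induction.

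A union bound over the $T = O(1)$ cascade timesteps combined with the graph-level concentration event produces the stated failure probability. The main obstacle in carrying out this plan is the joint control of graph-level and cascade-level randomness: one must simultaneously argue that $|\Frontier_t|$ and the local degrees $\deg_{\SEst_t}(v)$ appearing in~\eqref{expr:MuTFormula} concentrate uniformly across timesteps as the cascade propagates, and then sharpen the infection-level Chernoff estimate (for instance via a Bernstein-type argument exploiting that individual infection probabilities are $O(p_n p) = o(1)$) to match the $e^{-\Omega(np)}$ rate claimed in the theorem rather than the naive $e^{-\Omega((np)^{1/3})}$. Once these graph-theoretic ingredients are in place, the remainder of the argument is essentially an application of Chernoff's inequality at each step, composed inductively.
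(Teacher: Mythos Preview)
Your inductive argument maintaining $\SEst_j = S_j$ exactly is correct and yields $d_{\hat S}(C,\hat C)=0$ on the good event, stronger than the stated $O((np)^{-1/3})$. The paper follows the same inductive skeleton and also uses Assumption~\ref{asmpt:observation-rate} to block absorbing an observation from the next ground-truth interval, but it carries a \emph{weaker} running invariant, $|\Delta R_i| := \bigl||R(S,i)|-|R(\SEst,i)|\bigr| = o(\mu_{i-1}^{2/3})$, and invests its main technical effort in a stability lemma (its Theorem~4) showing $|\mu_i(S)-\mu_i(\SEst)| \le p\,\mu_{i-1}^{2/3}\,\mu_i(S)$ under that relaxed hypothesis; only afterward does its inductive step conclude that in fact $|\Delta R_{i+1}|=0$. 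Your shortcut of assuming exact agreement up to step $t$ sidesteps that stability lemma entirely and is genuinely more elementary; what it sacrifices is robustness---if Assumption~\ref{asmpt:observation-rate} were weakened so that exact recovery could slip at some step, the paper's looser invariant would still propagate while yours would not.

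Two points to repair. First, $T$ is not $O(1)$: Assumption~\ref{asmpt:model-params} only forces $T = o(\log n)$ (combine $p \ge C\log n/n$ with $p = o(n^{-T/(T+1)})$), so the union bound must tolerate a growing number of steps; this is harmless since $\mu_t = \Theta((np)^{t+1})$ grows geometrically and the tail probabilities still sum. Second, the Bernstein sharpening you propose does not work as written: a vertex $v\in\Frontier_t$ has conditional infection probability at least $p_n=\Theta(1)$, not $O(p_np)$, so the per-term variances are not small and Bernstein gives no improvement over Chernoff at deviation scale $\mu_t^{2/3}$. The infection-level concentration step therefore delivers $e^{-\Omega(\mu_t^{1/3})}$, as you first computed; the $e^{-\Omega(np)}$ rate in the paper's statement is driven by the graph-level frontier-size concentration (its Lemma~1), where the deviations are constant-factor, rather than by the $\mu_t^{2/3}$-scale step.
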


\subsection{Running time analysis}

We have a strong guarantee on the running time of FastClock in the independent cascade
case.

\begin{theorem}[Running time of FastClock]
    The FastClock algorithm for the independent cascade model runs in time $O(N + n + m)$,
    where $m$ is the number of edges in the input graph.
    \label{thm:fastclock-running-time}
\end{theorem}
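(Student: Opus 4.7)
The plan is to show that, with appropriate incremental bookkeeping, the total work across all iterations of the main loop in FastClock telescopes to $O(m)$, while the work of scanning through the observed infection sequence telescopes to $O(N)$, and the initialization costs are $O(n)$. First I would fix the data structures: (i) an array storing, for each vertex $v$, a flag indicating whether $v \in \bigcup_{j\le t} \SEst_j$; (ii) a representation of the current frontier $\Frontier_t$ (say, as a doubly linked list with pointers into it from each vertex, so insertions/deletions are $O(1)$); (iii) an integer counter $d(v)$ for each $v \in \Frontier_t$ recording $\deg_{\SEst_t}(v)$; and (iv) a running scalar $\Sigma_t = \sum_{j=0}^{t}|\SEst_j|$ and $F_t = |\Frontier_t|$. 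The initialization (Steps 1--4) scans $\hat{S}_0,\hat{S}_1,\dots$ until the accumulated union has size $|S_0|$, which is $O(n)$ total, and sets up the arrays in $O(n)$.

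Next I would analyze one iteration of the while loop. When moving from iteration $t$ to $t+1$, the active set $\SEst_{t+1}$ is read off from $\hat{S}_{t_{obs}+1},\dots,\hat{S}_{t'_{obs}}$; this read costs $\sum_{i=t_{obs}+1}^{t'_{obs}} |\hat{S}_i|$, which when summed over all iterations telescopes to $O(N + n)$ since the $t_{obs}$ pointer is monotone. To update the frontier and the degree counters, I would iterate over every edge incident to each newly infected vertex $v\in\SEst_{t+1}$: for each neighbor $w$ of $v$, if $w$ is not already infected, increment a temporary $d'(w)$ and insert $w$ into the new frontier list if not yet present. Once this sweep is done, the new frontier $\Frontier_{t+1}$ and the new degree counts $\deg_{\SEst_{t+1}}(w) = d'(w)$ are available, along with updated $\Sigma_{t+1}$ and $F_{t+1}$. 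Computing $\mu_{t+1}$ from the formula \eqref{expr:MuTFormula} then costs $O(|\Frontier_{t+1}|)$ by a single pass over the frontier list. The choice of $t'_{obs}$ for the next interval is made by continuing to walk the $\hat{S}_i$ pointer forward while maintaining a running total of $|\hat{S}_i|$ and stopping at the largest feasible $\Delta$, which is linear in the number of observation slots consumed by this interval.

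To amortize, I would observe two facts. First, because the sets $\SEst_0,\SEst_1,\dots$ are disjoint, each vertex is newly infected in exactly one iteration, so $\sum_t \sum_{v\in\SEst_t}\deg(v) \le 2m$; this bounds the total work of the edge sweep and hence also the combined sizes of the frontiers since $|\Frontier_{t+1}|\le \sum_{v\in\SEst_{t+1}}\deg(v)$. Thus the $\mu_t$ computations and frontier maintenance cost $O(m)$ in total. Second, the pointer $t_{obs}$ into the observed sequence is monotone nondecreasing, so the cumulative work of the inner scans over $\hat{S}_i$ is $O(N)$. Adding the $O(n)$ initialization and array-allocation cost yields the claimed $O(N + n + m)$ bound.

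The only real pitfall I anticipate is making sure the degree counters $d(v)$ are correctly reset between iterations without incurring a per-iteration $O(n)$ cost: since only frontier vertices have nonzero counters, and each frontier vertex that is not carried into the next iteration is touched at most twice (once on insertion, once on removal), this reset can be bundled into the same edge sweep and does not break the amortization. With those bookkeeping details handled, the theorem follows directly.
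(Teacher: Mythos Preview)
Your proposal is correct and follows essentially the same amortized argument as the paper: both rely on the disjointness of the $\SEst_t$ to bound the total edge work by $O(m)$, the monotonicity of the $t_{obs}$ pointer to bound the scan of $\hat{S}$ by $O(N)$, and an $O(n)$ preprocessing/initialization cost. Your version is more explicit about the data structures and the counter-reset subtlety, but the underlying reasoning is the same.
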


Thus, the running time of FastClock is asymptotically much smaller than that of
the dynamic programming estimator 
from~\cite{ditursi2017network}.

\section{Proof sketches}
\label{sec:proof-sketches}

In this section, we primarily give proof sketches (except where subsection
headers indicate full proofs).  Full proofs of all results are given in
the appendix.

\subsection{Sketch of proof of Theorem~\ref{thm:fastclock-approx-main}}
The proof of Theorem~\ref{thm:fastclock-approx-main} employs an auxiliary result (Theorem~\ref{thm:fastclock-utility} in the appendix, 
which we call the \emph{FastClock utility theorem})
stating that with high probability, for every $i$, the intersection
of the ground truth infection sequence element $S_i$ with the estimated
infection sequence element $\SEst_i$ %
is asymptotically equivalent
in cardinality to $S_i$ itself.  We prove this theorem by induction on
the infection sequence element index $i$, which requires a careful design
of the inductive hypothesis.

Given the utility theorem, the required upper bound on the distortion
$d_{\hat{S}}(C, \hat{C})$ follows by summing over 
all possible pairs $S_i, S_j$ of infection sequence elements in $S$, the 
ground truth infection sequence, then summing over all vertex pairs 
$u \in S_i, v \in S_j$.  This inner sum can be approximated using the 
utility theorem.

\subsection{Full proof of Theorem~\ref{thm:fastclock-running-time}}
We analyze the worst-case running time of FastClock as follows:
initialization takes $O(1)$ time.  The dominant contribution to the running time is
the \emph{while} loop.  Since $t_{obs}$ is initially $0$ and increases by at least
$1$ in each iteration, the total number of iterations is at most $N$.  The remaining
analysis involves showing that each vertex and edge is only processed, a constant number of 
times, in $O(1)$ of these loop iterations, so that the running time is at most 
$O(N + n + m)$, as claimed.

In particular, the calculation of $\mu_t$ in every step involves a summation over all
edges from currently active vertices to their uninfected neighbors, along with a calculation involving the current number of uninfected vertices (which we can keep track of using $O(1)$
calculations per iteration of the loop).  A vertex is only active
in a single iteration of the loop.  Thus, each of these edges is only processed once in
this step.
The calculation of $t'_{obs}$ entails calculating a sum over elements of $\hat{S}$
that are only processed once in all of the iterations of the loop.  The calculation of
all of the $|\hat{S}_i|$ can be done as a preprocessing step via an iteration over all 
$n$ vertices of $G$.  Finally, the calculation of $\Frontier_{t+1}$ entails a union
over the same set of elements of $\hat{S}$ as in the calculation of the maximum, followed
by a traversal of all edges incident on elements of $\SEst_{t+1}$ whose other ends connect
to uninfected vertices.  These operations involve
processing the vertices in $\SEst_{t+1}$ (which happens only in a single iteration of
the loop, and, thus, with the preprocessing step of calculating the $|\hat{S}|_i|$, only
a constant number of times in the entire algorithm).  The edges leading to elements of
$\Frontier_{t+1}$ from elements of $\SEst_{t+1}$ are traversed at most twice in the loop:
once in the building of $\Frontier_{t+1}$ and once in the next iteration in the calculation
of $\mu_t$.

This implies that each vertex and edge is only processed $O(1)$ times in the entire algorithm.
This leads to the claimed running time of $O(N+n+m)$, which completes the proof.
\section{Empirical results on synthetic graphs}
\label{sec:empirical}

In this section, we present empirical results on synthetic graphs and cascades. Our goal is to confirm the theoretical guarantees of \FastClock~and compare it to the \emph{dynamic programming (DP)} algorithm optimizing a proxy of the maximum likelihood for observed cascades proposed by~\cite{ditursi2017network}. Our comparative analysis focuses on (i) \emph{distance}  of the estimated clock from the
ground truth clock (see Definition~\ref{def:dist}) and (ii) empirical \emph{running time} of both techniques.

We generate synthetic graphs using both the Erd\H{o}s–R\'enyi and the Stochastic Block Model. We then generate  synthetic cascades on each graph using the independent cascade (IC) model. We employ the obtained cascade sequence $S$ as the ground truth infection sequence, and create corresponding distorted (disaggregated) sequences $\hat{S}$ by ``stretching'' each ground truth time step of $S$. Specifically, to obtain a sample of a distorted observation sequence $\hat{S}$, we distribute the activated nodes in the ground truth time steps to $l$ corresponding time steps, where each node is placed in one of these $l$ timesteps uniformly at random.  Here, $l$ is an integer \emph{stretch} factor greater than $1$.  This implicitly specifies a clock $C$ on the stretched timeline, which we would like to infer (we note
that while all of our experiments involve a uniformly stretched timeline, our theoretical
contributions are more general).
We then employ both \FastClock~ and DP to estimate the ground truth clock from $\hat{S}$. We draw $50$ samples for each setting and report average and standard deviation for both running time and quality of estimations for each setting.

\paragraph{Experiments on Erd\H{o}s–R\'enyi graphs.} We first experiment with Erd\H{o}s–R\'enyi to confirm the theoretical behavior of our estimator and compare its running time and quality to the DP baseline. We report the results in Figure~\ref{fig:er}. With increasing graph size \FastClock's distance from the ground truth clock diminishes (as expected based on Theorem~\ref{thm:fastclock-approx-main}), while that of DP increases (Fig.~\ref{fig:q-n}). Note that DP optimizes a proxy to the cascade likelihood and in our experiments tend to over-aggregate the timeline which for large graph sizes results in incorrect recovery of the ground truth clocks. Similarly, \FastClock's estimate quality is better than that of DP for varying on $p_n$ (Fig.~\ref{fig:q-pn}), graph density (Fig.~\ref{fig:q-p}) and stretch factor for the cascades (Fig.~\ref{fig:q-stretch}), with distance from ground truth close to $0$ for regimes aligned with the key assumptions we make for our main results (Assumptions~\ref{asmpt:observation-rate},\ref{asmpt:model-params}). In addition to superior quality, \FastClock's running time scales linearly with the graph size and is orders of magnitude smaller than that of DP for sufficiently large instances (Figs.~\ref{fig:t-n},~\ref{fig:t-pn},~\ref{fig:t-p},~\ref{fig:t-stretch}).

\begin{figure*}[t!]
\footnotesize
\centering
\subfigure[Distance with $n$][]
 { \centering
  \includegraphics[width=0.23\textwidth]{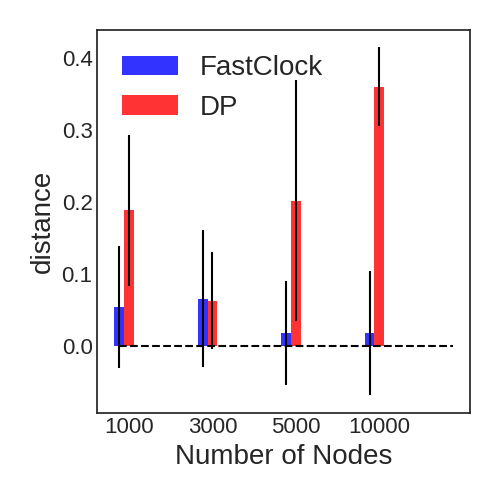}
  \label{fig:q-n}
 }
 \subfigure[Run time with $n$][]
 { \centering
  \includegraphics[width=0.23\textwidth]{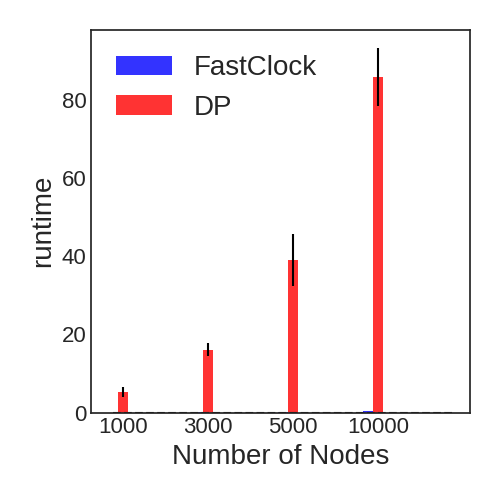}
  \label{fig:t-n}
 }
 \subfigure[Distance with $p_n$][]
 { \centering
  \includegraphics[width=0.23\textwidth]{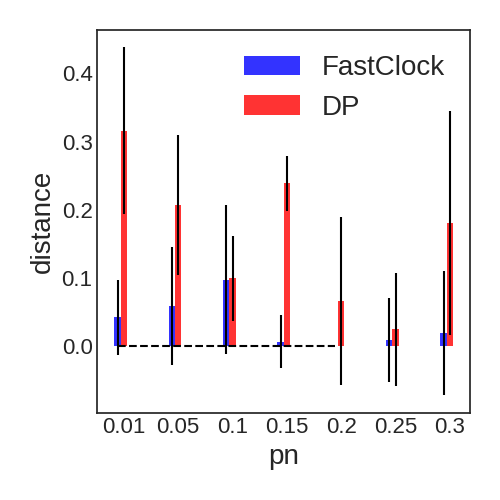}
  \label{fig:q-pn}
  }
  \subfigure[Run time with $p_n$][]
  { \centering
  \includegraphics[width=0.23\textwidth]{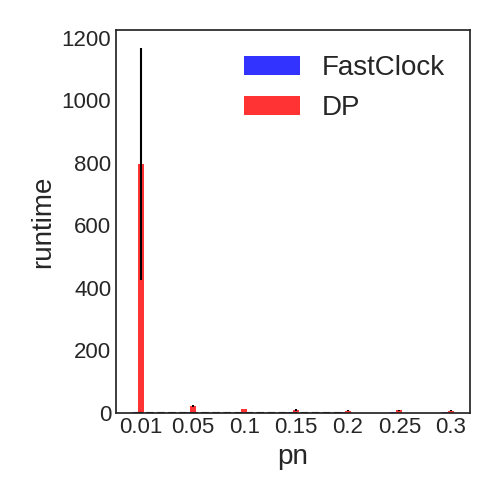}
  \label{fig:t-pn}
  }
  \subfigure[Distance with density][]
 { \centering
  \includegraphics[width=0.23\textwidth]{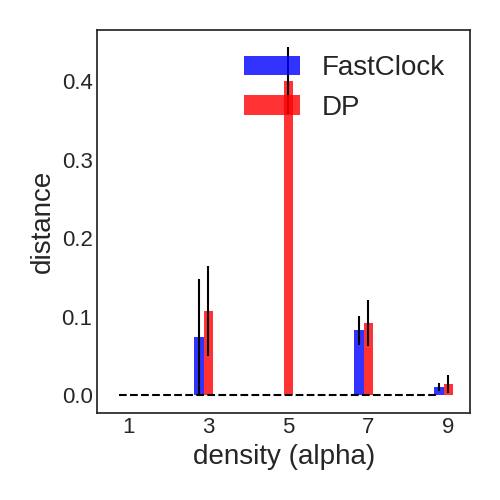}
  \label{fig:q-p}
 }
 \subfigure[Run time with density][]
 { \centering
  \includegraphics[width=0.23\textwidth]{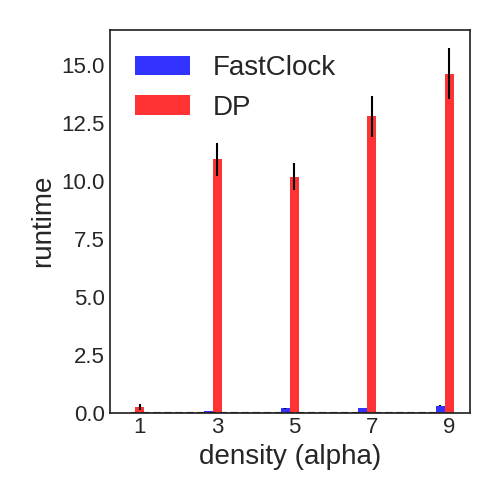}
  \label{fig:t-p}
 }
 \subfigure[Distance with stretch][]
 { \centering
   \includegraphics[width=0.23\textwidth]{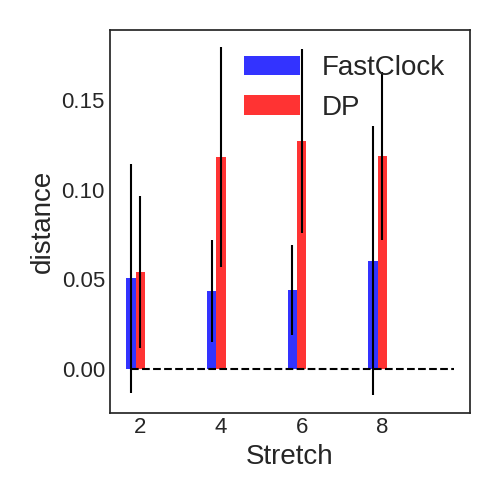}
  \label{fig:q-stretch}
  }
  \subfigure[Run time with stretch][]
  { \centering
  \includegraphics[width=0.23\textwidth]{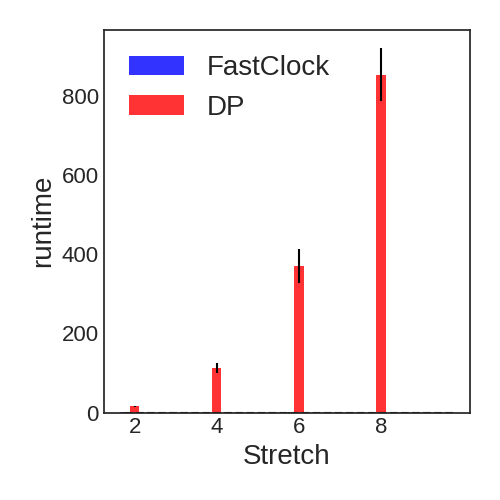}
  \label{fig:t-stretch}
  }
\caption{\footnotesize Comparison of the distance and runtime of the estimated clocks by \FastClock~ and the baseline DP from \cite{ditursi2017network} on Erd\H{o}s–R\'enyi graphs (default parameters for all experiments: $p_n=0.1$, $p_e=10^{-7}$, $n=3000$, $p=n^{-1/3}$, stretch $l=2$ unless varying in the specific experiment). (a),(b): Varying graph size.  (c),(d): Varying infection probability $p_n$. (e),(f): Varying graph density $p=n^{-1/\alpha}$. (g),(h): Varying stretch. \vspace{-0.4in}
}
\label{fig:er}
\end{figure*}

\paragraph{Experiments on Stochastic Block Model (SBM) graphs.} We would also like to understand the behavior of our estimator on graphs with communities where the cascade may cross community boundaries. To this end, we experiment with SBM graphs varying the inter-block connectivity and virality ($p_n$) of the cascades and report results in Fig.~\ref{fig:sbm}. As the cross-block connectivity increases and approaches that within blocks (i.e. the graph structure approaches ER-graph) \FastClock's quality improves and is significantly better than that of DP (Fig.~\ref{fig:bm-q-conn}). When, however, the transmission probability $p_n$ is high, coupled with sparse inter-block connectivity, \FastClock's estimation quality deteriorates beyond that of DP (Fig.~\ref{fig:bm-q-pn}). This behavior is due to the relatively large variance of $\mu_t$ when the cascade crosses a sparse cut in the graph with high probability. This challenging scenario opens an important research direction we plan to explore in future work. 

\begin{figure*}[t!]
\footnotesize
\centering
\subfigure[Dist. with connectivity][]
 { \centering
  \includegraphics[width=0.23\textwidth]{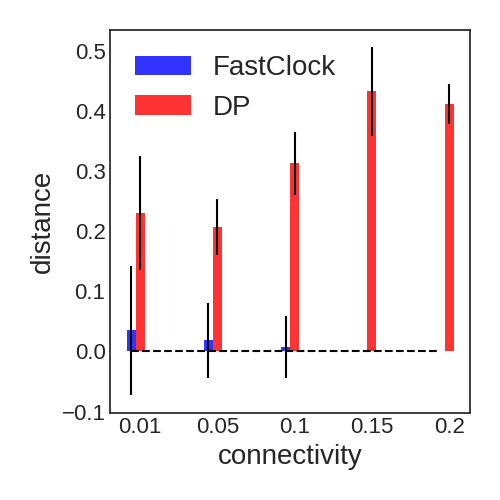}
  \label{fig:bm-q-conn}
 }
 \subfigure[Time with connectivity][]
 { \centering
  \includegraphics[width=0.23\textwidth]{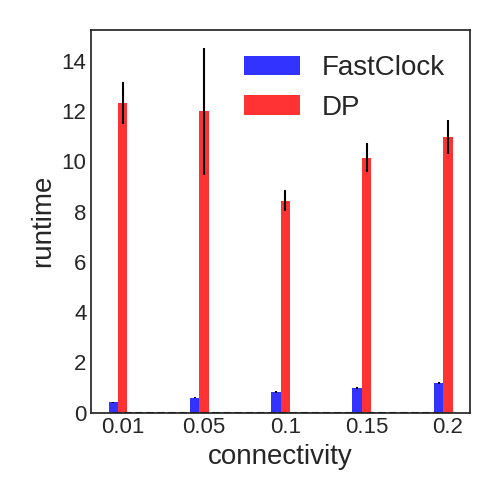}
  \label{fig:bm-t-conn}
 }
 \subfigure[Distance with $p_n$][]
 { \centering
  \includegraphics[width=0.23\textwidth]{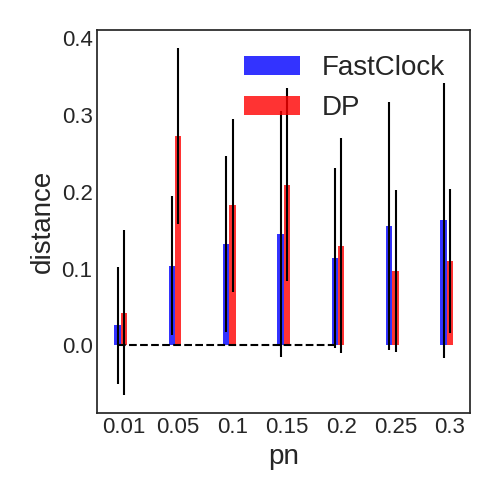}
  \label{fig:bm-q-pn}
  }
  \subfigure[Run time with $p_n$][]
  { \centering
  \includegraphics[width=0.23\textwidth]{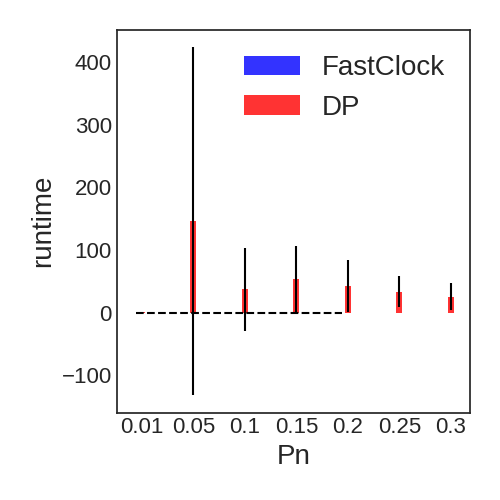}
  \label{fig:bm-t-pn}
  }
\caption{\footnotesize Comparison of the distance and runtime of the estimated clocks by \FastClock~ and the baseline DP from \cite{ditursi2017network} on Stochastic Block Model graphs (default parameters: $n=5000$, two blocks/communities of sizes $n/\sqrt{n}$ and $n-n/\sqrt{n}$, $p_e=10^{-7}$, stretch $l=2$). (a),(b): Varying inter-block connectivity ($p_n=0.1$) where a setting of $0.2$ makes the graph equivalent to an Erd\H{o}s–R\'enyi graph with $p=0.2$.  (c),(d): Varying infection probability $p_n$ (inter-block connectivity is set to $0.01$). %
}
\label{fig:sbm}
\vspace{-0.3in}
\end{figure*}

\section{Conclusions and future work}
\label{sec:conclusions}
We have formulated a statistical estimation framework for the problem of recovery of all states of a discrete-time cascade from temporally distorted
observation sequences.  In the case of oversampling clocks, we showed that
temporal distortion can be corrected with high accuracy and low computational
cost, subject to certain natural constraints on the structure of the underlying
graph and on the cascade model: in essence, these must be such that
the graph is an expander with appropriate parameters; that, conditioned on an 
estimated current state of the process at any time, the expected number of 
vertices infected in the next timestep is immune to small errors in the 
estimated state;
and that the number of vertices infected in the next timestep is 
well-concentrated around its conditional expected value.  
We empirically showed that the FastClock algorithm is superior in accuracy and running time
to the current state of the art dynamic programming algorithm.  Furthermore, unlike this baseline, FastClock comes with theoretical accuracy guarantees.
Our results
are formally stated for the independent cascade model, but they very
likely hold for a broad class of other models, including the linear threshold 
model.

We intend to pursue further work on this problem: most pressingly, 
our empirical results and intuition derived from our theorems indicate
that FastClock may not perform accurately when the graph contains
very sparse cuts (so that it is not an expander graph).  Further work
is needed to determine whether accuracy and computational efficiency
can be achieved for such graphs.
Furthermore, our method relies on knowledge of the parameters of the
cascade process.  We intend to investigate the extent to which this assumption can be relaxed.

{
\bibliography{references}
}
\appendix

\section{Glossary of notation}

Here we collect the notation that is used in the main body of the paper and in the proofs in the appendix.

\begin{enumerate}
    \item
        $\Nbd(S)$: Neighborhood of the set $S$ of vertices in a given graph.
    \item 
        $S = (S_0, S_1, ..., S_T)$ -- An infection sequence generated
        by a cascade model with $T+1$ timesteps.  Each $S_j$ is a subset
        of vertices, and $S_i \lintersect S_j = \emptyset$ for $i\neq j$.
        We denote by $|S|$ the number of timesteps of $S$: $T+1$.
        We think of $S$ as the ground truth infection sequence.
    \item
        $\hat{S} = (\hat{S}_0, \hat{S}_1, ..., \hat{S}_N)$ -- An observation
        of an infection sequence that has been temporally distorted by a clock.
    \item
        $C$ -- The ground-truth clock in our estimation problem.
    \item
        $\hat{C}$ -- The clock estimated by our algorithm.
    \item
        $\SEst$ -- The estimate of the original infection sequence $S$
        induced by our estimate $\hat{C}$ of the clock $C$ applied to
        the observed infection sequence $\hat{S}$.
    \item
        $\sigma_{t}(S)$, for an infection sequence $S$ and a timestep index 
        $t \in |S|$ -- 
        The $\sigma$-field generated by the event that the first $t$ 
        infection sets of the IC process are given by $S_0, ..., S_t$.
    \item
        $\mu_t(S)$, for an infection sequence $S$ and a timestep index
        $t \in |S|$ -- $\E[|S_{t+1}| ~|~ \sigma_t(S)]$.  This is the
        expected number of vertices infected in the $t+1$st timestep,
        given the infection sequence up to and including timestep $t$.
    \item
        $N$ -- The index of the last observed infection set.  That is, $|\hat{S}| = N+1$.
    \item
        $T$ -- The index of the last ground truth infection set.  
        That is, $|S| = T+1$.
    \item
        $n$ -- The size of the graph.
    \item
        $p_n$ -- The probability in the IC model of transmission across an edge in a single
        timestep.
    \item
        $p_e$ -- The probability of infection of a vertex in a single timestep by a non-network source.
    \item
        $R(S, i)$ -- For an infection sequence $S$ and an index $i$, define 
        the $i$th \emph{running sum}
        to be
        \begin{align}
            R(S, i)
            = \union_{j\leq i} S_j.
        \end{align}
    \item
        $\Frontier(S, i)$ -- For an infection sequence $S$ and an index $i \in \{0, 1, ..., |S|\}$, define the $i$th frontier set to be 
        \begin{align}
            \Frontier(S, i)
            = \Nbd(S_i) \setminus R(S, i).
        \end{align}
        The $i$th frontier with respect to $S$ is the set of neighbors
        of vertices infected in timestep $i$ that have not infected
        by the end of timestep $i$.
    \item
        $\CF(S, i)$ -- The candidate frontier set at the end of timestep
        $i$ in infection sequence $S$.  That is, this is 
        \begin{align}
            \CF(S, i)
            = [n] \setminus R(S, i).
        \end{align}
        Note that $\Frontier(S, i) \subseteq \CF(S, i)$.
    \item
        $\CCF(S, \SEst, i, j)$ -- The common candidate frontier:
        \begin{align}
            \CCF(S, \SEst, i, j)
            = \CF(S, i) \lintersect \CF(\SEst, j).
        \end{align}

\end{enumerate}

\section{Proofs}

In this section, we give full proofs of all results.

\subsection{Proof of Theorem~\ref{thm:fastclock-approx-main}}

To prove the main FastClock approximation theorem, we start by characterizing the growth of $\mu_{i}(S)$ and $|S_i|$
as a function of $n$ and $i$.
Note that this is a result about the independent cascade process, not
the FastClock algorithm.
\begin{lemma}[Growth of $\mu_i(S)$ and $|S_i|$]
    We have that, with probability at least $1 - e^{-np}$, for all $i \leq T$, 
    \begin{align}
        \mu_i(S) = \Theta((np)^{i+1}),
    \end{align}
    where the $\Theta(\cdot)$ is uniform in $i$.
    Furthermore, with probability at least $1 - e^{-np}$,
    we have
    \begin{align}
        |S_i| = \Theta((np)^{i})
    \end{align}
    for every $i$.
    \label{lemma:mu_i-growth}
\end{lemma}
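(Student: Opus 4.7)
The plan is to prove the two claims jointly by induction on the timestep index $i \in \{0, 1, \ldots, T\}$, with inductive hypothesis that $|S_j| = \Theta((np)^j)$ for all $j \leq i$ (and hence $|R(S,j)| = O((np)^j) = o(n)$ for all such $j$, where the last equality uses $p = o(n^{-T/(T+1)})$ so that $(np)^{T+1} = o(n)$). The base case $|S_0| = \Theta(1)$ holds by hypothesis.

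For the inductive step, I first pass to the standard ``live-edge'' coupling of the IC model on $G(n,p)$: reveal in advance which of the $\binom{n}{2}$ possible edges are both present in $G$ and will succeed in transmitting, so that each pair of vertices is joined by a live edge independently with probability $p\, p_n$. Under this coupling, conditional on $\sigma_i(S)$, the indicators $\mathbf{1}[v \in S_{i+1}]$ for $v \in [n]\setminus R(S,i)$ are mutually independent Bernoullis, because $S_{i+1}$ is determined only by edges from $S_i$ to $[n]\setminus R(S,i)$, which are as yet unrevealed. The network-infection probability for each such $v$ is $1 - (1 - p\, p_n)^{|S_i|}$, and the inductive hypothesis gives $p\, p_n |S_i| = O((np)^{i+1}/n) = o(1)$ for $i \leq T$, so this linearizes as $p\, p_n |S_i|(1 + o(1))$.

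Combining this with $n - |R(S,i)| = n(1-o(1))$ and the external-infection bound $p_e \cdot n = o(np)$ (which follows from $p_e = o(p)$), and recalling that $p_n$ is a positive constant, I obtain
\begin{align}
\mu_i(S) = n\, p\, p_n |S_i|\,(1 + o(1)) + O(p_e n) = \Theta((np)^{i+1}),
\end{align}
uniformly in $i \leq T$. A multiplicative Chernoff bound applied to the sum of independent Bernoullis defining $|S_{i+1}|$, which has mean $\mu_i(S) = \Omega(np)$, then yields $|S_{i+1}| = (1+o(1))\mu_i(S) = \Theta((np)^{i+1})$ with conditional failure probability $e^{-\Omega(np)}$, closing the induction.

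Finally, I union-bound over the at most $T+1$ steps; the joint assumptions $p \geq C \log n / n$ and $p = o(n^{-T/(T+1)})$ force $T = O(\log n / \log \log n)$, so the union bound only inflates the failure probability by a polylogarithmic factor, leaving it $e^{-\Omega(np)}$ as claimed. The main obstacle I expect is the careful tracking of the $(1+o(1))$ multiplicative slack through the $T$ rounds of induction: a naive compounding of errors could balloon the slack to a constant factor after $T$ steps. The saving grace is that the linearization error in step $i$ is $O((np)^{i+1}/n)$, which is geometrically smaller than the main term and therefore telescopes to $o(1)$ across all $T$ steps; the Chernoff slack likewise tightens geometrically with $i$ because $\mu_i(S)$ grows geometrically. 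Keeping these two decaying error schedules aligned across the induction is the only delicate part of the argument.
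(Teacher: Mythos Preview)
Your induction and error bookkeeping are sound, and the live-edge coupling is a clean device for the $|S_i|$ claim: by folding the $G(n,p)$ randomness and the transmission randomness into a single $\mathrm{Bernoulli}(p\,p_n)$ per vertex pair, you get conditional independence of the infection indicators and a one-shot Chernoff bound per step. The paper instead keeps the two sources of randomness separate and applies concentration twice per step---first to $|\mathbb{F}_i(S)|$ over the graph edges from $S_i$ to uninfected vertices, then to $|S_{i+1}|$ given the frontier. Your route is shorter for the second conclusion of the lemma.

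There is, however, a genuine gap in your treatment of $\mu_i(S)$. In this paper $\mu_i(S)$ is the conditional expectation given the \emph{realized graph} $G$ together with the cascade history: the algorithm computes it from the displayed formula involving $\mathbb{F}_i(S)$ and $\deg_{S_i}(v)$, so it is a $G$-measurable random variable, and the lemma asserts that \emph{this} random variable equals $\Theta((np)^{i+1})$ with high probability over $G$. Your live-edge calculation instead produces $\E[|S_{i+1}| \mid S_0,\ldots,S_i]$ with the expectation \emph{also} averaging over $G$---a deterministic function of the infection sets alone---so the identity you display for $\mu_i(S)$ is about a different object. The fix is short: since $\mu_i(S) = \Theta(|\mathbb{F}_i(S)|)$ whenever $p_n = \Theta(1)$, and conditional on $S_0,\ldots,S_i$ the graph edges from $S_i$ into $[n]\setminus R(S,i)$ are still fresh independent $\mathrm{Bernoulli}(p)$'s, one has $|\mathbb{F}_i(S)| \sim \mathrm{Binomial}\bigl(n-|R(S,i)|,\,1-(1-p)^{|S_i|}\bigr)$, and one further Chernoff bound (on the graph randomness, not on live edges) gives $|\mathbb{F}_i(S)| = \Theta((np)^{i+1})$ with the required probability. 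That frontier-set concentration is exactly the step the paper supplies; once you insert it, your argument is complete and, for the $|S_i|$ half, slightly more streamlined than the paper's.
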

\begin{proof}
    We prove this by induction on $i$ and use the formula (\ref{expr:MuTFormula})
    throughout.
    
    \paragraph{Base case ($i = 0$):}
    In the base case, we are to verify that $\mu_0(S) = \Theta(np)$.  The first
    term of (\ref{expr:MuTFormula}) is non-negative and at most $p_e \cdot n$.  By
    our assumption, we have that $p_e = o(p_n)$, implying that the first term
    is $o(np)$.  Thus, it remains for us to show that the second sum is $\Theta(np)$.
    The dominant contribution comes from the second term of each term of the sum:
    \begin{align}
        \sum_{v \in \Frontier_0(S)} (p_e + (1-p_e)(1-  (1-p_n))^{\deg_{S_0}(v)})
        &= \Theta(\sum_{v \in \Frontier_0(S)} 1 - (1-p_n)^{ \deg_{S_0}(v) } ) \\
        &= \Theta( |\Frontier_0(S)| + \sum_{v\in\Frontier_0(S)}(1-p_n)^{ \deg_{S_0}(v) } ).
    \end{align}
    In the final expression above, the remaining sum is lower bounded by $0$ and upper bounded
    by $|\Frontier_0(S)|$, since each term is between $0$ and $1$.  Thus, we have shown that,
    with probability exactly $1$,
    \begin{align}
        \mu_0(S)
        = \Theta(|\Frontier_0(S)|) + o(np).
    \end{align}
    Since $|\Frontier_0(S)|$ is the set of uninfected neighbors of all vertices in $S_0$,
    and, by assumption, $|S_0| = \Theta(1)$, we have that with probability
    at least $1 - e^{-np}$,
    \begin{align}
        |\Frontier_0(S)| = \Theta(np).
    \end{align}
    Thus, we have
    \begin{align}
        \mu_0(S) = \Theta(np)
    \end{align}
    with probability $\geq 1-e^{-np}$. 
    Conditioning on this event (which is only an event dealing with the
    graph structure), we have that $|S_1| \sim \Binomial(\Theta(np), p_n)$,
    and a Chernoff bound gives us that with probability $1-e^{-\Omega(np)}$, $|S_1| = \Theta(np)$, as desired.
    This completes the proof of the base case.
    
    \paragraph{Induction ($i > 0$, and we verify the inductive hypothesis for $i$):}
    We assume that $\mu_{j}(S) = \Theta((np)^{j+1})$ and $|S_{j+1}| = \Theta((np)^{j+1})$ for 
    $j = 0, 1, ..., i-1$.  We must verify that it holds for $j=i$, with 
    probability at least $1 - e^{-np}$.  As in the base case, the first 
    term of (\ref{expr:MuTFormula}) is $O(np_e) \ll np \ll (np)^{i+1}$.
    It is, therefore, negligible with probability $1$.  The second
    term again provides the dominant contribution and is easily seen
    to be $\Theta(|\Frontier_i(S)|)$, just as in the base case.
    Thus, it remains to show that $|\Frontier_i(S)| = \Theta((np)^{i+1})$
    with probability at least $1 - e^{-\Omega(np)}$,
    which implies the desired result for $\mu_i(S)$.  The inductive hypothesis
    implies that $|S_{i}| = \Theta((np)^{i})$, and the number of uninfected
    vertices is $n - \sum_{j=0}^{i} |S_j| = n - \Theta((np)^{i+1})$.
    Since $i \leq T-1$, this is asymptotically equivalent to $n$.
    
    Now, conditioned on the first $i$ elements of $S$, the $i$th
    frontier $|\Frontier_i(S)| \sim \Binomial(n\cdot(1-o(1)), 1-(1-p)^{|S_i|})$.  Thus, with probability at least
    $1 - e^{-\Omega((np)^{i})}$, we have
    \begin{align}
        |\Frontier_i(S)|
        = \Theta( n \cdot (1 - (1-p)^{|S_i|})).
    \end{align}
    Now, 
    \begin{align}
        1 - (1-p)^{|S_i|}
        = 1 - (1-p)^{(np)^{i}}.
    \end{align}
    Since $p = o(1)$, we have
    \begin{align}
        1 - (1-p)^{ (np)^{i} }
        \sim
        1 - e^{-p^{i+1}n^{i}}
    \end{align}
    From (\ref{expr:pnPowerAsymptotics}) below, we have that
    \begin{align}
        p^{i+1}n^i = o(1).
    \end{align}
    This implies that
    \begin{align}
        1 - e^{-p^{i+1}n^{i}}
        = 1 - (1 - p^{i+1}n^i)(1 + O(p^{i+1}n^i))
        = p^{i+1}n^i (1 + o(1)).
    \end{align}
    Thus, with probability at least $1 - e^{-\Omega((np)^{i})}$,
    \begin{align}
        |\Frontier_i(S)|
        = \Theta( (np)^{i+1}),
    \end{align}
    which implies that
    \begin{align}
        \mu_i(S)
        = \Theta( (np)^{i+1}).
    \end{align}
    By concentration of $|S_i|$, we then have that with probability
    at least $1 - e^{-\Omega(\mu_i(S))}$, 
    \begin{align}
        |S_i| = \Theta((np)^{i+1}), 
    \end{align}
    as desired.
    
    \paragraph{Completing the proof}
    Let $G_i$ be the event that the inductive hypothesis holds
    for index $i = 0, 1, ..., T-1$.  Then we have
    \begin{align}
        \Pr[  \lintersect_{i\geq 0} G_i]
        = \Pr[G_0] \cdot \prod_{i\geq 1} \Pr[ G_i ~|~ \lintersect_{j=0}^{i-1} G_j]
        \geq \prod_{i=0}^{T-1} (1 - e^{-\Omega((np)^i)})
        1 - e^{-\Omega((np))}.
    \end{align}
    This completes the proof.
\end{proof}

Next, we state
and prove a utility theorem (Theorem~\ref{thm:fastclock-utility} below).
To state it, we need some notation: our estimated clock $\hat{C}$ 
induces an estimate $\SEst$ of the ground truth infection sequence
$S$.  In particular, $\SEst$ is the unique infection sequence such
that distorting $\SEst$ according to $\hat{C}$ yields $\hat{S}$
as an observed infection sequence.

\begin{theorem}[Main FastClock analysis utility theorem]
    We have that with probability $1 - e^{-\Omega(np)}$,
    for every $i \leq T-1$, 
    \begin{align}
        |S_i \lintersect \SEst_i|
        = |S_i| \cdot (1 - O((np)^{-1/3})).
    \end{align}
    \label{thm:fastclock-utility}
\end{theorem}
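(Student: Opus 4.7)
The plan is to proceed by induction on the ground-truth timestep index $i$, from $0$ up to $T-1$. The base case $i=0$ is essentially deterministic: the algorithm is given $|S_0|$ and locates the smallest observation index at which the cumulative union $\bigcup_{k\leq t_{obs}} \hat{S}_k$ first has size $|S_0|$. Since $\hat{S}$ is a clock-consistent partition of the disjoint ground-truth sets, this index coincides with the last observation of the ground-truth interval $C_0$, so $\SEst_0 = S_0$ with probability one. For the inductive step, I would carry a strengthened hypothesis that also controls $|\SEst_j \setminus S_j|$ and the symmetric difference of running sums $R(S,j) \triangle R(\SEst,j)$ up to $|S_j|\cdot O((np)^{-1/3})$, since the stated intersection bound alone is not strong enough to propagate cleanly through the formula for $\mu$; this is the ``careful design of the inductive hypothesis'' the paper's sketch alludes to.

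Assuming the strengthened hypothesis for all $j < i$, the first substantive step is to bound the relative error $|\mu_{i-1}(\SEst) - \mu_{i-1}(S)|/\mu_{i-1}(S)$ by $O((np)^{-1/3})$. Using the closed-form expression (\ref{expr:MuTFormula}) together with the growth estimates from Lemma~\ref{lemma:mu_i-growth}, the dominant error term reduces to $|\Frontier_{i-1}(\SEst) \triangle \Frontier_{i-1}(S)|$, which Erd\H{o}s--R\'enyi concentration of edge counts out of a given vertex set bounds by $|S_{i-1}\triangle \SEst_{i-1}|\cdot np$ up to lower-order contributions from the cumulative-set discrepancy; invoking the inductive bound on $|S_{i-1}\triangle \SEst_{i-1}|$ delivers the target relative rate. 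The second step is to analyze the algorithm's threshold $r = \mu_{i-1}(\SEst)(1 + \mu_{i-1}(\SEst)^{-1/3})$. Conditional on the ground-truth prefix through step $i-1$, the quantity $|S_i|$ is a sum of essentially independent Bernoullis with mean $\mu_{i-1}(S)$, so a Chernoff bound gives $|S_i| = \mu_{i-1}(S)(1 \pm O(\mu_{i-1}(S)^{-1/2}))$ with probability $1 - e^{-\Omega(np)}$. Combining, the ratio $r/|S_i|$ lies in $[1 - O((np)^{-1/3}),\, 1 + O((np)^{-1/3})]$ with high probability, while Assumption~\ref{asmpt:observation-rate} ensures any single observation in $C_{i+1}$ has size at least $\epsilon|S_{i+1}| = \Theta((np)^{i+1})$, far exceeding this gap. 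This separation prevents the algorithm from accidentally crossing into $C_{i+1}$, so either $\SEst_i = S_i$ exactly or the portion of $S_i$ not captured is of size $|S_i|\cdot O((np)^{-1/3})$, which matches the claim.

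The main obstacle I anticipate is keeping the relative-error bound on $\mu_{i-1}(\SEst)$ sharp enough: the target rate $(np)^{-1/3}$ that appears in the theorem is also the rate that must be maintained in the inductive step itself, so the argument is tight and leaves essentially no margin for slack in the propagation of errors. This is what forces the strengthened inductive hypothesis (tracking $|\SEst_j \setminus S_j|$ and the cumulative-set symmetric differences in addition to the stated intersection) and the reliance on Erd\H{o}s--R\'enyi concentration of neighborhood sizes rather than worst-case degree bounds. The conclusion then follows by a union bound over $i = 0, 1, \ldots, T-1$: each inductive step fails with probability at most $e^{-\Omega(np)}$ from the Chernoff bounds on $|S_i|$ and on the relevant edge counts, yielding the overall failure probability of $e^{-\Omega(np)}$ in the statement.
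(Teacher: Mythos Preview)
Your high-level strategy matches the paper's: induction on $i$ with a strengthened hypothesis tracking both $|S_j\triangle\SEst_j|$ and the running-sum discrepancy $|R(S,j)\triangle R(\SEst,j)|$, an intermediate bound on $|\mu_{i-1}(\SEst)-\mu_{i-1}(S)|$, Chernoff concentration of $|S_i|$, and Assumption~\ref{asmpt:observation-rate} to rule out overshoot into $C_{i+1}$. The base case and the chain-rule/union-bound closure are likewise the same.

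There is, however, a genuine quantitative gap. You bound the relative error $|\mu_{i-1}(\SEst)-\mu_{i-1}(S)|/\mu_{i-1}(S)$ by $O((np)^{-1/3})$, obtained by propagating the inductive bound on $|S_{i-1}\triangle\SEst_{i-1}|$ through the frontier. But the algorithm's safety margin in the threshold is $\mu_{i-1}(\SEst)^{-1/3}=\Theta((np)^{-i/3})$, which for $i\geq 2$ is strictly \emph{smaller} than $(np)^{-1/3}$. Hence your lower bound $r/|S_i|\geq 1-O((np)^{-1/3})$ does not force $r\geq |S_i|$; and if $r<|S_i|$, the algorithm must stop at least one full observation short, which by Assumption~\ref{asmpt:observation-rate} has size $\geq\epsilon|S_i|$, contradicting the claimed $O((np)^{-1/3})$ deficit. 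So your ``or'' clause (``the portion of $S_i$ not captured is of size $|S_i|\cdot O((np)^{-1/3})$'') is not an available alternative, and the induction does not close at the rate you carry.

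The paper avoids this by proving a much sharper estimate (its Theorem~\ref{thm:mu_diff_upper_bound}): the relative error in $\mu_i$ is at most $p\,\mu_{i-1}(S)^{2/3}$, which under $p=o(n^{-T/(T+1)})$ is $o(\mu_i(S)^{-1/3})$. This is small enough that the $+\mu^{-1/3}$ buffer absorbs both the $\mu$-discrepancy and the $O(\mu^{-1/2})$ Chernoff fluctuation, guaranteeing $r$ exceeds the true increment and ruling out undershoot. Combined with Assumption~\ref{asmpt:observation-rate} on the overshoot side, the paper then deduces $|R(S,i+1)\triangle R(\SEst,i+1)|=0$ at every step---exact alignment, not merely an $O((np)^{-1/3})$-small one---which is what prevents error accumulation. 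Your remark that the argument is ``tight with essentially no margin'' is therefore misplaced: the needed slack lives in the $\mu$-bound, and the induction requires it.
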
    

We will prove this theorem by induction on $i$.  The inductive hypothesis
needed is subtle, as a na\:ive hypothesis is too weak.  To formulate it and
to prove our result,
we need some notation: for an infection sequence $W$, we define the $i$th
running sum to be 
\begin{align}
    R(W, i)
    = \union_{j=0}^i W_j.
\end{align}
We define the \emph{frontier} and \emph{running sum discrepancy sets}
as follows:
\begin{align}
    \FrontierDisc(S, \hat{S}, i, j)
    = \Frontier_i(S) ~\symmdiff~ \Frontier_j(\hat{S}) \\
    \RDisc(S, \hat{S}, i, j)
    = R(S, i) ~\symmdiff~ R(\hat{S}, j),
\end{align}
where $\symmdiff$ denotes the symmetric difference between two sets.

We define the \emph{candidate frontier} at timestep $i$ in infection sequence $S$ to be
\begin{align}
    \CF(S, i)
    = [n] \setminus R(S, i).
\end{align}
This is the set of vertices that are not yet infected after timestep $i$.

We define the \emph{common candidate frontier} to be
\begin{align}
    \CCF(S, \hat{S}, i, j)
    = \CF(S, i) \lintersect \CF(\hat{S}, j).
\end{align}

With this notation in hand, we define the following inductive hypotheses:
\begin{enumerate}
    \item 
        There is a small discrepancy between the running sums of the true and estimated
        clocks:
        \begin{align}
            ||R(S, i)| - |R(\SEst, i||
            \leq f_1(n, i),
            \label{expr:R_inductive}
        \end{align}
        where we
        set, with foresight, $f_1(n, i) = \mu_{i-1}(S)^{.66} = o(\mu_{i-1}(S)^{2/3})$.
    \item
        There is a small discrepancy between $S_i$ and $\SEst_{i}$:
        \begin{align}
            1 - \frac{| S_i \lintersect \SEst_{i}|}{|S_i|} 
            \leq f_2(n, i),
            \label{expr:C_inductive}
        \end{align}
        where we set, with foresight, $f_2(n, i) = D\cdot \mu_{i-1}(S)^{-1/3}$, for some large enough absolute constant $D$.
\end{enumerate}
We will use these to prove Theorem~\ref{thm:fastclock-utility}.  The
base case and inductive steps are proven in Propositions~\ref{prop:base-case} and~\ref{prop:inductive-step} below.
First, we
start by proving an upper bound (Theorem~\ref{thm:mu_diff_upper_bound}) on the following difference:
\begin{align}
    |\mu_i(S) - \mu_{i}(\SEst)|.
    \label{expr:mu_diff_upper_bound}
\end{align}
    In essence, the upper bound says that at any given clock time step, the expected number of
    nodes infected in the next timestep is almost the same according to both
    the true and estimated clock.
    This
    will later be used verify the two inductive
    hypotheses stated above.
    
    \begin{theorem}[Upper bound on (\ref{expr:mu_diff_upper_bound})]
        Granted the inductive hypotheses explained above, we have 
        that
        \begin{align}
            |\mu_{i}(S) - \mu_{i}(\SEst)|
            \leq p \mu_{i-1}(S)^{2/3} \mu_i(S),
        \end{align}
        with probability $\geq 1 - e^{-\Omega(\mu_{i}(S))}$.
        \label{thm:mu_diff_upper_bound}
    \end{theorem}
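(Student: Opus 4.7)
The plan is to start from the explicit formula~\eqref{expr:MuTFormula} for $\mu_i(W)$ and write
\[
\mu_i(S)-\mu_i(\SEst)=\sum_{v\in[n]}\bigl(g(v,S)-g(v,\SEst)\bigr),
\]
where $g(v,W)=\Pr[v\text{ gets infected at step }i\mid\sigma_i(W)]$ equals $0$ when $v\in R(W,i)$ and $p_e+(1-p_e)\bigl(1-(1-p_n)^{\deg_{W_i}(v)}\bigr)$ otherwise. The external-infection piece contributes at most $p_e\cdot|R(S,i)\triangle R(\SEst,i)|=o(p)\cdot|R(S,i)\triangle R(\SEst,i)|$, which will be subsumed by the dominant term. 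I would then partition $[n]$ into three classes: (A) $R(S,i)\cap R(\SEst,i)$, where the contribution is exactly $0$; (B) the symmetric difference $R(S,i)\triangle R(\SEst,i)$, where each vertex contributes $O(1)$; and (C) the common candidate frontier $\CCF(S,\SEst,i,i)$, where the factorization
\[
(1-p_n)^{\deg_{\SEst_i}(v)}-(1-p_n)^{\deg_{S_i}(v)}=(1-p_n)^{\deg_{S_i\cap\SEst_i}(v)}\cdot\bigl((1-p_n)^{\deg_{\SEst_i\setminus S_i}(v)}-(1-p_n)^{\deg_{S_i\setminus\SEst_i}(v)}\bigr)
\]
combined with the Lipschitz estimate $|(1-p_n)^a-(1-p_n)^b|\le |\ln(1-p_n)|\cdot|a-b|$ gives a per-vertex bound of $O(\deg_{S_i\triangle\SEst_i}(v))$ since $p_n$ is bounded away from $0$ and $1$.

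Next I would convert these per-vertex estimates into aggregate bounds by controlling two combinatorial quantities. First, $|R(S,i)\triangle R(\SEst,i)|\le\sum_{j\le i}|S_j\triangle\SEst_j|$ because each misclassified vertex contributes to at most one term; using IH~2 at every $j\le i-1$ and the bound on $|S_i\triangle\SEst_i|$ derived below, this telescopes to $O(\mu_{i-1}(S)^{2/3})$ (with the last term dominating, since $\mu_j(S)^{2/3}$ is geometric in $j$). Second, I combine IH~2 (which controls $|S_i\setminus\SEst_i|\le D\mu_{i-1}(S)^{2/3}$) with IH~1 applied at both $i$ and $i-1$ (which gives $\bigl||S_i|-|\SEst_i|\bigr|\le 2f_1(n,i)$) to obtain the matching bound $|\SEst_i\setminus S_i|=O(\mu_{i-1}(S)^{2/3})$, and hence $|S_i\triangle\SEst_i|=O(\mu_{i-1}(S)^{2/3})$. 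For the class~(C) sum I use $\sum_{v\in C}\deg_{S_i\triangle\SEst_i}(v)\le\sum_{u\in S_i\triangle\SEst_i}\deg(u)$, which in $G(n,p)$ has expectation $|S_i\triangle\SEst_i|\cdot np(1+o(1))$ and concentrates via a Chernoff bound for sums of independent $\Bernoulli(p)$ edge indicators with deviation probability $e^{-\Omega(|S_i\triangle\SEst_i|\cdot np)}=e^{-\Omega(\mu_i(S))}$.

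Putting the pieces together, class~(B) contributes $O(\mu_{i-1}(S)^{2/3})$ and class~(C) contributes $O\!\bigl(p\cdot|S_i\triangle\SEst_i|\cdot n\bigr)=O\!\bigl(p\,\mu_{i-1}(S)^{2/3}\cdot n\bigr)$; since $\mu_i(S)=\Theta((np)^{i+1})\ge\Theta(np)=\Theta(n\cdot p)$ by Lemma~\ref{lemma:mu_i-growth}, this last expression is dominated by $p\,\mu_{i-1}(S)^{2/3}\mu_i(S)$, as required. The main obstacle I expect is twofold. The delicate part is ensuring that the bound $|\SEst_i\setminus S_i|=O(\mu_{i-1}(S)^{2/3})$ actually follows from the IHs together with the cutoff rule in Algorithm~\ref{alg:FastClock}: this uses that FastClock stops once cumulative observations exceed $\mu_i(\SEst)(1+\mu_i(\SEst)^{-1/3})$, so a separate circular dependency on $\mu_i(\SEst)\approx\mu_i(S)$ must be broken by first proving an a~priori weaker bound and then bootstrapping. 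The other subtlety is that the concentration for $\sum_{u\in S_i\triangle\SEst_i}\deg(u)$ must hold uniformly over the random set $S_i\triangle\SEst_i$ (which itself depends on the graph), so I would appeal to the fact that the relevant edges from $S_i\triangle\SEst_i$ to $\CCF(S,\SEst,i,i)$ are disjoint from the edges revealed during the first $i$ timesteps of the cascade, allowing a clean conditional Chernoff argument.
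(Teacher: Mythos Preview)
Your decomposition into classes (A), (B), (C) and your handling of (A) and (B) are fine and run parallel to the paper's own three-term triangle-inequality split. The genuine gap is in the final comparison for class~(C). You correctly obtain
\[
\sum_{v\in\CCF}\deg_{S_i\triangle\SEst_i}(v)\le\sum_{u\in S_i\triangle\SEst_i}\deg(u)=O\bigl(|S_i\triangle\SEst_i|\cdot np\bigr)=O\bigl(np\,\mu_{i-1}(S)^{2/3}\bigr),
\]
but your assertion that this is dominated by $p\,\mu_{i-1}(S)^{2/3}\mu_i(S)$ reduces, after cancelling $p\,\mu_{i-1}(S)^{2/3}$, to $n\le O(\mu_i(S))$. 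Your stated justification ``$\mu_i(S)\ge\Theta(np)$'' only yields $np\le O(\mu_i(S))$, which is a different inequality. Under Assumption~\ref{asmpt:model-params} the cascade does not flood the graph, so $\mu_i(S)=\Theta((np)^{i+1})=o(n)$ for every $i\le T-1$; hence your class-(C) bound is too large by the diverging factor $n/\mu_i(S)$. The same arithmetic error appears in your concentration claim: $|S_i\triangle\SEst_i|\cdot np=\Theta\bigl((np)^{1+2i/3}\bigr)$, which for $i\ge1$ is strictly smaller than $(np)^{i+1}=\mu_i(S)$, so the failure probability you derive is weaker than the $e^{-\Omega(\mu_i(S))}$ you assert.

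The paper avoids this loss by a different accounting for the dominant term~(\ref{expr:QDiscTerm}). Rather than using the Lipschitz estimate $|Q(i,S,v)-Q(i,\SEst,v)|=O(\deg_{S_i\triangle\SEst_i}(v))$ and summing degrees over all of $\CCF$, the paper bounds each nonzero summand by~$1$ (each $Q$ is a probability) and then counts the nonzero terms over $v\in\Frontier_i(S)\cup\Frontier_i(\SEst)$, arriving at the estimate $|\Frontier_i(S)\cup\Frontier_i(\SEst)|\cdot(1-q^{|\SEst_i-S_i|})$. Since $|\Frontier_i(S)\cup\Frontier_i(\SEst)|=\Theta(|\Frontier_i(S)|)=\Theta(\mu_i(S))$ rather than $|\CCF|=\Theta(n)$, this is what produces the required factor $\mu_i(S)$ in place of your $n$. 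Your passage to $\sum_{u\in S_i\triangle\SEst_i}\deg(u)$ gives away precisely this saving: each $u\in S_i\triangle\SEst_i$ has $\Theta(np)$ neighbors in total, almost all of which lie in $\CCF\setminus(\Frontier_i(S)\cup\Frontier_i(\SEst))$ and contribute nothing to the sum you actually need. To repair your argument you would have to restrict the degree count to edges landing in the frontier union, which brings you back to the paper's approach.
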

    \begin{proof}
        To upper bound (\ref{expr:mu_diff_upper_bound}), we apply the triangle inequality to (\ref{expr:MuTFormula})
        to get
        \begin{align}
            |\mu_i(S) - \mu_{i}(\SEst)|
            &\leq 
                p_e \cdot \left| |\Frontier_i(S)| - |\Frontier_{i}(\SEst)|  \right| 
                \label{expr:FrontierDiscTerm} \\
                &+ p_e \left| \sum_{j=0}^{i} |\SEst_j| - \sum_{j=0}^i |S_j| \right| \label{expr:RDiscTerm} \\
                &+ \left| \sum_{v\in \Frontier_i(S)} Q(i, S, v) - \sum_{v \in \Frontier_{i}(\SEst)} Q(i, \SEst, v) \right|, \label{expr:QDiscTerm}
        \end{align}
        where $Q(i, S, v) = p_e + (1-p_e)(1 - (1-p_n)^{\deg_{S_i}(v)})$.

        We will upper bound each of the three terms (\ref{expr:FrontierDiscTerm}), (\ref{expr:RDiscTerm}), and (\ref{expr:QDiscTerm}) separately.

        \paragraph{\textbf{Upper bounding (\ref{expr:FrontierDiscTerm}) by $O(p_e |\Frontier_i(S)|)$:} }
        
        We first note that 
        \begin{align}
            \left| |\Frontier_i(S)| - |\Frontier_{i}(\SEst)|  \right| 
            \leq |\Delta\F(S, \SEst, i, i)|.
        \end{align}
        So it is enough to upper bound the frontier discrepancy set cardinality.
        In order to do this, we decompose it as follows:
        \begin{align}
            |\Delta\Frontier(S, \SEst, i, i)| 
            = |\Delta\Frontier(S, \SEst, i, i) \lintersect \Delta R(S, \SEst, i, i)|
              +  |\Delta\Frontier(S, \SEst, i, i) \lintersect \CCF(S, \SEst, i, i)|.
            \label{expr:DeltaFrontierDecomposition}      
        \end{align}
        This decomposition holds for the following reason:
        let $v$ be a vertex in
        the frontier discrepancy set $\Delta\Frontier(S, \SEst, i, i)$.  Suppose, further, that $v$ is not
        in the common candidate frontier for $S_i, \SEst_i$ (so it does not contribute to the second term on the right-hand side of (\ref{expr:DeltaFrontierDecomposition})).  We will show that it must
        be a member of $\Delta R(S, \SEst, i, i)$, which will complete the proof of the claimed decomposition.  Then 
        $v$ must be a member of at least one of $R(S, i), R(\SEst, i)$ (i.e., it must already be infected in at least one of these).  If it were a member of both, then it would not be a member of either frontier, so it could
        not be a member of the frontier discrepancy set.  Thus, it $v$ is
        only a member of one of $R(S, i)$ or $R(\SEst, i)$.  This implies
        that $v \in \Delta R(S, \SEst, i, i)$.  This directly implies
        the claimed decomposition (\ref{expr:DeltaFrontierDecomposition}).
        
        We now compute the expected value of each term of 
        the right-hand side of (\ref{expr:DeltaFrontierDecomposition}), where the expectation is taken with respect to the graph $G$.  After upper bounding the 
        expectations, standard concentration inequalities will complete our 
        claimed bound on the size of the frontier discrepancy set.  
        
        In the first 
        term, the size of the intersection of the frontier discrepancy with the 
        running sum discrepancy is simply
        the number of vertices in the running sum discrepancy set that have at least one edge
        to some vertex in $S_i$ (here we assume, without loss of generality, that
        $|R(S, i)| \leq |R(\SEst, i)|$).  Using linearity of expectation, 
        the expected number of such vertices is
        \begin{align}
            \E[|\Delta\Frontier(S, \SEst, i, i) \lintersect \Delta R(S, \SEst, i, i)|]
            = |\Delta R(S, \SEst, i, i)| \cdot (1 - (1-p)^{|S_i|}).
            \label{expr:ExpectedDeltaFrontierTerm1}
        \end{align}
        Here $(1 - (1-p)^{|S_i|})$ is the probability that, for a fixed vertex 
        $w \in \Delta R(S, \SEst, i, i)$, there is at least one edge between $w$
        and some vertex in $S_i$.
        
        We compute the expected value of the second term of (\ref{expr:DeltaFrontierDecomposition}) as follows.
        
        We claim that 
        \begin{align}
            \Delta\Frontier(S, \SEst, i, i) \lintersect \CCF(S, \SEst, i, i)
            \subseteq \CCF(S, \SEst, i, i) \lintersect ( \Nbd(\Delta R(S, \SEst, i, i)) \setminus \Nbd(S_i) ).
        \end{align}
        To show this, let $v \in \Delta\Frontier(S, \SEst, i, i) \lintersect \CCF(S, \SEst, i, i)$. The fact that $v$ is in the frontier discrepancy set means that it has an edge to 
        exactly one of $S_i, \SEst_i$.  This implies that it has an edge to the running sum
        discrepancy set.  Recalling that we assumed wlog that $|R(S, i)| \leq |R(\SEst, i)|$, 
        we must have that $\Delta R(S, \SEst, i, i) \lintersect S_i = \emptyset$, and so we 
        must also have that there are no edges from $v$ to $S_i$.  This completes the proof
        of the claimed set inclusion.  This implies that
        \begin{align}
            \E[|\Delta\Frontier(S, \SEst, i, i) \lintersect \CCF(S, \SEst, i, i)|]
            \leq \E[|\CCF(S, \SEst, i, i) \lintersect ( \Nbd(\Delta R(S, \SEst, i, i)) \setminus \Nbd(S_i) )|].
        \end{align}
        As above, the expectation is taken with respect to the random graph $G$.
        
        For a single vertex in the common candidate frontier, the probability that it lies 
        in the frontier discrepancy set is thus at most
        \begin{align}
            (1 -  (1-p)^{|\Delta R(S, \SEst, i, i)|}) \cdot (1-p)^{|S_i|}.
        \end{align}
        Thus, using linearity of expectation, the expected size of the 
        second term in (\ref{expr:DeltaFrontierDecomposition}) is upper bounded by 
        \begin{align}
            \E[|\Delta\Frontier(S, \SEst, i, i) \lintersect \CCF(S, \SEst, i, i)| ~|~ \sigma_i(S)]
            \leq |\CCF(S, \SEst, i, i)| \cdot  (1 -  (1-p)^{|\Delta R(S, \SEst, i, i)|}) \cdot (1-p)^{|S_i|}.
            \label{expr:ExpectedDeltaFrontierTerm2}
        \end{align}
        Combining (\ref{expr:ExpectedDeltaFrontierTerm1}) and 
        (\ref{expr:ExpectedDeltaFrontierTerm2}) and defining $q = 1-p$, 
        we have the following expression for the expected size of the frontier discrepancy
        set:
        \begin{align}
            &\E[ |\Delta\Frontier(S, \SEst, i, i)|] \label{expr:DeltaFExpected} \\
            &= |\Delta R(S, \SEst, i, i)| \cdot (1 - q^{|S_i|}) \\
              &\;\;+ |\CCF(S, \SEst, i, i)| \cdot  (1 -  q^{|\Delta R(S, \SEst, i, i)|}) \cdot q^{|S_i|}.
              \label{expr:DeltaFExpectedCCF}
        \end{align}

        We would like this to be $O(\E[|\Frontier_i(S)| ~|~ \sigma_i(S)])$.
        Note that $\E[|\Frontier_i(S)| ~|~ \sigma_i(S)]$ can be expressed as follows:
        \begin{align}
            \E[|\Frontier_i(S)| ~|~ \sigma_i(S)]
            &= (|\Delta R(S, \SEst, i, i)| \label{expr:FExpected}  \\
              &\;\;+ |\CCF(S, \SEst, i, i)|) \cdot  (1 -  q^{|S_i|}).
        \end{align}
        
        The intuition behind (\ref{expr:DeltaFExpected}) being $O(\E[|\Frontier_i(S)| ~|~ \sigma_i(S)])$ is as follows:
        the $\Delta R$ term is exactly the same as in (\ref{expr:FExpected}).  However,
        this term is negligible compared to the common candidate frontier term in both
        expected values.  The second term, (\ref{expr:DeltaFExpectedCCF}), can be asymptotically
        simplified as follows: we have
        \begin{align}
            1 - q^{|\Delta R(S, \hat{S}, i, i)|}
            &= 1 - (1 - p)^{ |\Delta R(S, \SEst, i, i)| } \\ 
            &\sim 1 - (1 - p)\cdot  |\Delta R(S, \SEst, i, i)|) \\
            &= p\cdot  |\Delta R(S, \SEst, i, i)| \\ 
            &= p |S_i| \cdot \frac{|\Delta R(S, \SEst, i, i)| }{|S_i|} \\
            &\sim (1 - q^{|S_i|}) \cdot  \frac{|\Delta R(S, \SEst, i, i)| }{|S_i|}.
        \end{align}
        Here, we have used the following facts:
        \begin{itemize}
            \item
                For the first asymptotic equivalence, we used the fact that
                $p |\Delta R(S, \SEst, i, i)|  = o(1)$.
                More precisely, we have from the inductive hypothesis that
                \begin{align}
                    |\Delta R(S, \SEst, i, i)|
                    = o(\mu_{i-1}(S)^{0.66})
                    = o( (np)^{i \cdot 0.66}),
                \end{align}
                so we have
                \begin{align}
                    p |\Delta R(S, \SEst, i, i)|
                    = o(p^{0.66i + 1} n^{0.66i})
                    = o(n^{-T/(T+1) + 0.66i/(T+1)}),
                \end{align}
                which is polynomially decaying in $n$.
            \item
                For the second asymptotic equivalence, we used the fact that
                $p|S_i| = o(1)$.  More precisely, this comes from the
                fact that
                \begin{align}
                    p|S_i|
                    = O(p (np)^i)
                    = O(p^{i+1}n^{i}).
                \end{align}
                Now, we use the fact that $p = o(n^{-\frac{T}{T+1}})$:
                \begin{align}
                    p^{i+1} n^{i}
                    = o(n^{-\frac{T}{T+1}(i+1) + i}),
                    \label{expr:pnPowerAsymptotics}
                \end{align}
                from our assumption on the growth of $p$.
                Now, we need to show that the exponent is sufficiently negative and bounded
                away from $0$.
                \begin{align}
                    -\frac{T}{T+1}(i+1) + i
                    = \frac{-T\cdot (i+1) + i\cdot(T+1)}{T+1}
                    = \frac{-T + i}{T+1}
                    \leq \frac{-1}{T+1}.
                \end{align}
                We have used the fact that $i \leq T-1$.
                Now, the constraints that we imposed on $p$ imply that
                $T = o(\log n)$, so
                \begin{align}
                    n^{\frac{-1}{T+1}}
                    = e^{\frac{-\log n}{T+1}}
                    = o(1),
                \end{align}
                as desired, since the exponent tends to $-\infty$ as $n\to\infty$.
        \end{itemize}
        
        Let us be more precise about what we proved so far.
        We have
        \begin{align}
            \E[ |\Delta\Frontier(S, \SEst, i, i)| ~|~ \sigma_i(S)] 
            &\sim (1 - q^{|S_i|}) \cdot |\CCF(S, \SEst, i, i)|  
              \cdot \left( \frac{|\Delta R|}{|\CCF|} + \frac{|\Delta R|}{|S_i|} q^{|S_i|} \right). 
        \end{align}
        Meanwhile,
        \begin{align}
            \E[ |\Frontier_i(S)| ~|~ \sigma_i(S)]
            = (1 - q^{|S_i|}) \cdot |\CCF| \cdot \left(1 + \frac{|\Delta R|}{|\CCF|} \right). 
        \end{align}
        We have that
        \begin{align}
            \frac{\E[|\Delta\F_i| ~|~ \sigma_i(S)]}{\E[ |\F_i| ~|~ \sigma_i(S)]}
            \sim \frac{ \frac{|\Delta R|}{|\CCF|}  + \frac{|\Delta R|}{|S_i|}\cdot q^{|S_i|}}{ 1 + \frac{|\Delta R|}{|\CCF |} }.
        \end{align}
        This can be simplified as follows:
        \begin{align}
            \frac{\E[|\Delta\F_i| ~|~ \sigma_i(S)]}{\E[ |\F_i| ~|~ \sigma_i(S)]}
            \sim
            \frac{ \frac{|\Delta R|}{|\CCF|}  + \frac{|\Delta R|}{|S_i|}\cdot q^{|S_i|}}{ 1 + \frac{|\Delta R|}{|\CCF |} }
            =
            \frac{ |\Delta R| \cdot \left( 1 + \frac{|\CCF| }{|S_i| }q^{|S_i|} \right) }{ |\CCF| + |\Delta R| }.
        \end{align}
        This can be upper bounded as follows, by distributing in the numerator
        and upper bounding $|\Delta R|$ by $|\Delta R| + |\CCF|$ in the numerator of
        the resulting first term:
        \begin{align}
            \frac{ |\Delta R| \cdot \left( 1 + \frac{|\CCF| }{|S_i| }q^{|S_i|} \right) }{ |\CCF| + |\Delta R| }
            \leq
            1 + 
            \frac{ |\Delta R| \cdot |\CCF| q^{|S_i|}}{|S_i|(|\CCF| + |\Delta R|)}.
        \end{align}
        We can further upper bound by noticing that $|\CCF| + |\Delta R| \geq |\CCF|$,
        so
        \begin{align}
            \frac{\E[|\Delta\Frontier_i| ~|~ \sigma_i(S)]}{\E[ |\Frontier_i| ~|~ \sigma_i(S)]}
            \leq
            1 + 
            \frac{ |\Delta R| }{|S_i|}.       
        \end{align}
        Now, by our inductive hypothesis, we know that $|\Delta R|_i = o(\mu_{i-1}(S)^{0.66})$, and by concentration, we know that $|S_i| = \Theta(\mu_{i-1}(S))$.  Thus, we have
        \begin{align}
            \frac{\E[|\Delta\Frontier_i| ~|~ \sigma_i]}{\E[ |\Frontier_i| ~|~ \sigma_i]}
            \leq
            1 + 
            \frac{ |\Delta R| }{|S_i|} 
            = 1 + o(\mu_{i-1}(S)^{-(1-0.66)})
            = O(1).
        \end{align}       
        
        Thus, 
        \begin{align}
            \E[ |\Delta\Frontier(S, \SEst, i, i)| ~|~ \sigma_i(S)]
            = O( \E[ |\Frontier_i(S)| ~|~ \sigma_i(S)] ).
        \end{align}

        Now, remember that our goal is to show that $|\Delta\Frontier(C, \hat{C}, i, i)| = O(|\Frontier_i(C)|)$ with high probability, conditioned on $\sigma_i(S)$.  This follows
        from the expectation bound above and the fact that the size of the frontier
        in both clocks is binomially distributed, so that standard concentration bounds
        apply.  This results in the following: 
        \begin{align}
            p_e|\Delta \Frontier_i| = O(p_e |\Frontier_i|)
            \label{expr:FDiscTermFinalUB}
        \end{align}
        with conditional probability at least $1 - e^{-\Omega((np)^i)}$.

        \paragraph{\textbf{Upper bounding (\ref{expr:RDiscTerm}) by $o(p_e |R(S, i)|)$}: }
        
        To upper bound (\ref{expr:RDiscTerm}), we note that
        \begin{align}
            \sum_{j=0}^i |S_j| = |R(S, i)|,
        \end{align}
        and an analogous identity holds with $\SEst$ in place of $S$.  Moreover,
        \begin{align}
            \left| R(S, i) - R(\SEst, i) \right|
            = |\Delta R(S, \SEst, i, i)|.
        \end{align}
        Thus, we have
        \begin{align}
            (\ref{expr:RDiscTerm})
            = p_e |\Delta R(S, \SEst, i, i)|
            \leq p_e f_1(n, i),
        \end{align}
        where the inequality is by the inductive hypothesis.
        We want this to be $o(p_e \cdot |R(S, i)|)$, which means that we want
        $|\Delta R(S, \SEst, i, i)| = o(|R(S, i)|)$.  This is follows from the inductive
        hypothesis.
        In particular, we know that
        $|R(S, i)| \geq |S_i|$, since $S_i \subset R(S, i)$.  Furthermore, we
        have by the inductive hypothesis that $|\Delta R(S, \SEst, i, i)| = o(\mu_{i-1}(S)^{0.66}) = o(|S_i|^{0.66})$.  Thus, we have
        \begin{align}
            p_e |\Delta R(S, \SEst, i, i)| = o(p_e |R(S, i)|),
            \label{expr:RDiscTermFinalUB}
        \end{align}
        with (conditional) probability $1$, as desired.
        
        \paragraph{\textbf{Upper bounding (\ref{expr:QDiscTerm})} by $\sum_{v \in \Frontier_i(S)} Q(i, S, v) p \mu_{i-1}^{2/3}(S)$:}
        
        To upper bound (\ref{expr:QDiscTerm}), we apply the triangle inequality and extend both
        sums to $v$ in $\Frontier_i(S) \lunion \Frontier_{i}(\SEst)$.  This results in the following upper bound: 
        \begin{align}
            (\ref{expr:QDiscTerm})
            \leq \sum_{v \in \Frontier_i(S) \lunion \Frontier_{i}(\SEst)} | Q(i, S, v) - Q(i, \SEst, v)|.
            \label{expr:QDiscUpperBound1}
        \end{align}
        To proceed, we will upper bound the number of nonzero terms in (\ref{expr:QDiscUpperBound1}).  Each nonzero term can be upper bounded by $1$, since
        $Q(i, S, v), Q(i, \SEst, v)$ are both probabilities.  We will show that the number
        of nonzero terms is at most $O(|\Frontier_i(S)|p\cdot \mu_{i-1}^{2/3}(S))$ with high probability.
        
        We write
        \begin{align}
            &Q(i, S, v) - Q(i, \SEst, v) \\
            &= p_e + (1-p_e)(1 - (1-p_n)^{\deg_{S_i}(v)})
            - p_e - (1 - p_e)(1 - (1-p_n)^{\deg_{\SEst_i}(v)}) \\
            &= (1 - p_e)( (1-p_n)^{\deg_{\SEst_i}(v)} -  (1-p_n)^{\deg_{S_i}(v)}).
        \end{align}

        Thus, a term in the sum (\ref{expr:QDiscUpperBound1}) is nonzero if and only
        if $\deg_{S_i}(v) \neq \deg_{\SEst_{i}}(v)$.  This happens if and only
        if $v$ has at least one edge to some vertex in $\SEst_{i} - S_i$.
        Thus, our task reduces to figuring out how many vertices $v$ there are that connect
        to some element of $\hat{C}_{i} - C_i$.  The expected number of such vertices
        is 
        \begin{align}
            |\Frontier_i(S) \lunion \Frontier_{i}(\SEst)| \cdot (1-q^{|\SEst_{i} - S_i|}).
            \label{expr:QDifferenceUpperBound}
        \end{align}
        This is an upper bound on the contribution of (\ref{expr:QDiscTerm}).
        We thus have 
        \begin{align}
            (\ref{expr:QDiscTerm})
            \leq |\Frontier_i(S) \lunion \Frontier_{i}(\SEst)| \cdot (1-q^{|\SEst_{i} - S_i|}).
        \end{align}

        Next, we show that $|\Frontier_i(S) \lunion \Frontier_{i}(\SEst)| = O(|\Frontier_i(S)|)$.
        To do this, we apply the results from upper bounding (\ref{expr:FrontierDiscTerm}).
        In particular, 
        \begin{align}
            |\Frontier_i(S) \lunion \Frontier_{i}(\SEst)|
            = |\Frontier_i(S) \lintersect  \Frontier_{i}(\SEst)| + |\Delta \Frontier(S, \SEst, i, i)|
            \leq |\Frontier_i(S)| + |\Delta \Frontier(S, \SEst, i, i)|
            = O(|\Frontier_i(S)|).
        \end{align}
        Next, we show that $1 - q^{ |\SEst_{i} - S_i| } = p\mu_{i-1}^{2/3}(S)$.  
        We can write
        \begin{align}
            q^{ |\SEst_{i} - S_i| }
            = (1 - p)^{|\SEst_{i} - S_i| }
            \sim
            e^{-p|\SEst_{i} - S_i|},
        \end{align}
        provided that $p\cdot |\SEst_{i} - S_i| = o(1)$.  Now
        from the inductive hypothesis, $|\SEst_{i} - S_i| = O(|S_i|^{2/3})$, and from Lemma~\ref{lemma:mu_i-growth}, we know
        that $|S_i| = O((np)^{i})$. 
        Then we have that
        \begin{align}
            1 - q^{ |\SEst_{i} - S_i| }
            \leq %
            1 - e^{-O( p(np)^{2/3 i} )}.
        \end{align}
        In order for this second term to be $1 - o(1)$, it is sufficient to have that
        \begin{align}
            p^{i+1} n^{i} = o(1).
        \end{align}
        This happens if and only if
        \begin{align}
            p^{i+1} = o(n^{-i})
            \iff p = o(n^{-\frac{i}{i+1}}).
        \end{align}
        This is guaranteed by our assumption that $p = o(n^{-\frac{T}{T+1}})$.
        Thus,
        \begin{align}
            1 - q^{ |\SEst_{i} - S_i| }
            \leq %
            1 - e^{-O( p(np)^{2/3 i} )}
            \sim p \mu_{i-1}^{2/3}(S).
        \end{align}
        
        We have shown that
        \begin{align}
            (\ref{expr:QDiscTerm})
            = O(|\Frontier_i(S)| \dot p \mu_{i-1}^{2/3}(S)).
        \end{align}
        Next, we show that $\sum_{v \in \Frontier_i(S)} Q(i, S, v) = \Omega(|\Frontier_i(S)|)$.
        We have
        \begin{align}
            Q(i, s, v) %
            \geq 1 - (1 - p_n)^{\deg_{S_i}(v)}.
        \end{align}
        Since the sum is over $v \in \Frontier_i(S)$, this implies that
        $\deg_{S_i}(v) \geq 1$.  So
        \begin{align}
            Q(i, C, v)
            \geq 1 - (1 - p_n)
            = p_n
            = \Omega(1).
        \end{align}
        Thus, 
        \begin{align}
            \sum_{v \in \Frontier_i(S)} Q(i, C, v)
            \geq |\Frontier_i(S)| \cdot p_n
            = \Omega(|\Frontier_i(S)|).
        \end{align}

        Thus, we have shown that
        \begin{align}
            (\ref{expr:QDiscTerm})
            \leq const \cdot \sum_{v \in \F_i(C)} Q(i, C, v) \cdot (1 - q^{|\hat{C}_{i} - C_i|})
            = const \sum_{v \in \F_i(C)} Q(i, C, v) p \mu_{i-1}^{2/3}(C), %
            \label{expr:QDiscTermFinalUB}
        \end{align}
        with conditional probability at least $1 - e^{-\Omega((np)^i)}$.

        \paragraph{Completing the proof}
        We combine (\ref{expr:QDiscTermFinalUB}), (\ref{expr:FDiscTermFinalUB}),
        and (\ref{expr:RDiscTermFinalUB}) to complete the proof.  %
    \end{proof} %

    So we have that the difference between $\mu_i(C)$ and 
    $\mu_{i}(\hat{C})$ is negligible in relation to $\mu_{i}(C)$.

Now, the next two propositions give the base case and inductive
step of the proof of Theorem~\ref{thm:fastclock-utility}.

    \begin{proposition}[Base case of the proof of Theorem~\ref{thm:fastclock-utility}]
        We have that, with probability $1$,
        \begin{align}
            |\Delta R_0| = |\Delta R(S, \SEst, 0, 0)| = 0,
        \end{align}
        and
        \begin{align}
            |\SEst_{0} \symmdiff S_{0}|
            = 0.
        \end{align}
        \label{prop:base-case}
    \end{proposition}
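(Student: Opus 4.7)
The claim collapses to a single deterministic identity, namely $\SEst_0 = S_0$. Indeed, by definition $R(S, 0) = S_0$ and $R(\SEst, 0) = \SEst_0$, so $\Delta R(S, \SEst, 0, 0) = S_0 \symmdiff \SEst_0$ and both claimed identities are equivalent to $\SEst_0 = S_0$. No concentration inequality or random-graph argument is actually required here; the base case is a pure consequence of the initialization of FastClock and the definition of a clock-consistent observation.

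The plan is the following. First, recall from Section~\ref{sec:problem-formulation} that the initial infection set $S_0$ is assumed known (or can be inferred deterministically from $\hat{S}$ under the mild side assumptions noted there). The algorithm then sets $t_{obs}$ to the smallest index $j$ with $|\bigcup_{k=0}^{j} \hat{S}_k| = |S_0|$ and defines $\SEst_0 = \bigcup_{k=0}^{t_{obs}} \hat{S}_k$. By the definition of a clock-consistent observation, the first observation interval $M_C(0)$ of the ground-truth clock $C$ satisfies $S_0 = \bigcup_{t' \in M_C(0)} \hat{S}_{t'}$, and in the oversampling regime $M_C(0)$ is an initial segment $\{0, 1, \ldots, m\}$ of the observation timeline.

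Next, I would invoke two structural facts about clock-consistent observations: the sets $\hat{S}_0, \hat{S}_1, \ldots$ are pairwise disjoint (because the ground-truth sets $S_0, S_1, \ldots$ are disjoint and each $\hat{S}_{t'}$ with $t' \in M_C(j)$ is a subset of $S_j$), and for every $j > m$ the set $\hat{S}_j$ lies inside some $S_{j'}$ with $j' \geq 1$, hence is disjoint from $S_0$. Consequently the partial union $U_j = \bigcup_{k=0}^{j} \hat{S}_k$ satisfies $U_m = S_0$ and $|U_j| > |S_0|$ for all $j > m$, while $|U_j| < |S_0|$ for $j < m$ by minimality. Thus the smallest $j$ at which $|U_j| = |S_0|$ is exactly $m$, and therefore $\SEst_0 = U_m = S_0$.

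The main (minor) obstacle is merely a bookkeeping point: justifying that the initialization condition in Algorithm~\ref{alg:FastClock} picks out the correct prefix even when the written condition reads ``$|\bigcup_{k=0}^{j} \hat{S}_k| = S_0$'' (the intended interpretation is cardinality comparison to $|S_0|$, together with knowledge of $S_0$ itself used to form $\SEst_0$). Once this interpretation is fixed, the argument above yields $\SEst_0 = S_0$ deterministically, so $|\Delta R(S, \SEst, 0, 0)| = 0$ and $|\SEst_0 \symmdiff S_0| = 0$ with probability $1$, completing the base case.
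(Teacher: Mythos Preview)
Your argument is correct and is precisely the content the paper compresses into the single sentence ``This follows directly from the assumed initial conditions'': both proofs rest on the fact that the algorithm's initialization reconstructs $\SEst_0$ as the prefix union $\bigcup_{k=0}^{t_{obs}} \hat S_k$, which equals $S_0$ by the clock-consistency of $\hat S$ in the oversampling regime. You have simply unpacked the details (disjointness of the $\hat S_k$, the prefix structure of $M_C(0)$, and the cardinality-matching initialization) that the paper leaves implicit.
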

    \begin{proof}
        This follows directly from the assumed initial 
        conditions.
    \end{proof}

    \begin{proposition}[Inductive step of the proof of Theorem~\ref{thm:fastclock-utility}]
        Assume that the inductive hypotheses (\ref{expr:R_inductive}) and (\ref{expr:C_inductive}) hold for $i$.  Then we have the following:
        \begin{align}
            |\SEst_{i+1} \symmdiff S_{i+1}|
            \leq |\Delta R_i| = |\Delta R(S, \SEst, i, i)| = o(\mu_{i-1}(S)^{2/3}) = o(\mu_i(S)^{2/3}). %
        \end{align}
        Equivalently,
        \begin{align}
            1 - \frac{|\SEst_{i+1} \lintersect S_{i+1}|}{|S_i|}
            = o(\mu_{i}(S)^{2/3}).
        \end{align}
        Furthermore,
        \begin{align}
            |\Delta R_{i+1}|
            \leq |\Delta R_{i}| \leq o(\mu_{i-1}(S)^{2/3}) = o(\mu_{i}(S)^{2/3}).
        \end{align}
        In other words, both inductive hypotheses are satisfied for
        $i+1$.  This holds with probability at least
        $1 - e^{-\Omega(\mu_{i}(C))}$.
        \label{prop:inductive-step}
    \end{proposition}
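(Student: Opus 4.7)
The plan is to show that, conditional on the inductive hypotheses and a handful of high-probability concentration events, FastClock picks its $(i+1)$-st endpoint so that $\hat{t}_{i+1} = t_{i+1}$ exactly. This immediately yields $|\Delta R_{i+1}| = 0 \leq |\Delta R_i|$, and a direct case split on the sign of $\hat{t}_i - t_i$ then shows that $\SEst_{i+1} \symmdiff S_{i+1}$ is precisely the set of vertices contributed by the observation indices in $[\min(t_i,\hat{t}_i)+1, \max(t_i,\hat{t}_i)]$, whose total cardinality equals $|\Delta R_i|$.

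To execute this plan, I would first invoke Theorem \ref{thm:mu_diff_upper_bound} to bound $|\mu_i(\SEst) - \mu_i(S)| = O(p \mu_{i-1}(S)^{2/3} \mu_i(S))$. Substituting the growth $\mu_j(S) = \Theta((np)^{j+1})$ from Lemma \ref{lemma:mu_i-growth}, this is $o(\mu_i(S)^{2/3})$ provided $p (np)^{(3i+1)/3} = o(1)$, which follows from $p = o(n^{-T/(T+1)})$ and $i \leq T-1$ via the arithmetic inequality $T/(T+1) > (3i+1)/(3i+4)$. A standard concentration inequality then pins down $||S_{i+1}| - \mu_i(S)| = o(\mu_i(S)^{2/3})$ conditional on $\sigma_i(S)$, and combining these two facts the algorithm's threshold $\mu_t(1+\mu_t^{-1/3}) = \mu_i(S) + \mu_i(S)^{2/3} + o(\mu_i(S)^{2/3})$. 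Next I would analyze the cumulative observation count $T_{\mathrm{tot}}$ that FastClock accrues upon consuming all observations with indices in $[\hat{t}_i+1, t_{i+1}]$: elementary bookkeeping gives $T_{\mathrm{tot}} = |S_{i+1}| + |\Delta R_i|$ when $\hat{t}_i < t_i$ and $T_{\mathrm{tot}} = |S_{i+1}| - |\Delta R_i|$ when $\hat{t}_i > t_i$, so in either case $T_{\mathrm{tot}} = \mu_i(S) + o(\mu_i(S)^{2/3})$ by the inductive bound $|\Delta R_i| = o(\mu_{i-1}(S)^{2/3})$. Thus the threshold exceeds $T_{\mathrm{tot}}$ by a slack of $\Theta(\mu_i(S)^{2/3})$. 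On the other hand, by Assumption \ref{asmpt:observation-rate} the first observation falling into $S_{i+2}$ has size at least $\epsilon |S_{i+2}| = \Omega(np \cdot \mu_i(S))$, and this dominates the slack because $np \cdot \mu_i(S)^{1/3} \to \infty$. Hence appending this observation would exceed the threshold, forcing the algorithm to stop exactly at $\hat{t}_{i+1} = t_{i+1}$.

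The main obstacle is controlling three error sources --- the $\mu_t$ bias inherited from the inductive estimate, the concentration fluctuation of $|S_{i+1}|$, and the carry-over $|\Delta R_i|$ --- all to $o(\mu_i(S)^{2/3})$ simultaneously, since that is exactly the slack built into the algorithm's threshold. The fluctuation and carry-over are comfortably small; the delicate ingredient is the bias term, which is precisely where the parameter assumption $p = o(n^{-T/(T+1)})$ enters, via Theorem \ref{thm:mu_diff_upper_bound}. A final union bound over these constant-many high-probability events yields the $1 - e^{-\Omega(\mu_i(S))}$ guarantee stated in the proposition.
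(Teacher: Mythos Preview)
Your proposal is correct and follows essentially the same route as the paper. Both proofs use Theorem~\ref{thm:mu_diff_upper_bound} together with concentration of $|S_{i+1}|$ to pin the algorithm's threshold near $\mu_i(S) + \mu_i(S)^{2/3}$, perform a case split on the sign of $|R(S,i)| - |R(\SEst,i)|$, and invoke Assumption~\ref{asmpt:observation-rate} to force $|\Delta R_{i+1}| = 0$. The only stylistic difference is that the paper first bounds $|\Delta R_{i+1}|$ by an $o(\mu_i(S))$ quantity and then argues it must vanish by Assumption~\ref{asmpt:observation-rate}, whereas you argue directly that the algorithm halts at $\hat{t}_{i+1} = t_{i+1}$ by comparing the threshold to the cumulative count both at and just beyond $t_{i+1}$; your explicit verification that $p\,\mu_{i-1}(S)^{2/3}\mu_i(S) = o(\mu_i(S)^{2/3})$ via the inequality $T/(T+1) > (3i+1)/(3i+4)$ is also a useful detail that the paper leaves implicit.
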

    \begin{proof}
        To prove this, we first need a few essential inequalities. 
        
        \begin{itemize}
        \item
        By definition of the algorithm,
        \begin{align}
            |\SEst_{i+1}| 
            \leq \mu_{i}(\SEst) (1 + \mu_{i}(\SEst)^{-1/3}),
            \label{expr:hatCmuUpperBound}
        \end{align}
        with probability $1$.  
        
        \item
        We will also need to prove an upper bound on 
        $|\SEst_{i+1}| - |S_{i+1}|$.  In particular, we
        will show that with probability at least $1 - e^{-\Omega(\mu_i(S))}$,
        \begin{align}
            |\SEst_{i+1}|
            \leq |S_{i+1}| \cdot (1 + O(\mu_i(S)^{-1/3})).
            \label{expr:C_hatCUpperBound}
        \end{align}
        We show this as follows.
        From 
        Theorem~\ref{thm:mu_diff_upper_bound},
        \begin{align*}
            \mu_{i}(\SEst)
            \leq \mu_i(S)(1 + p \mu_{i-1}(S)^{2/3}),
        \end{align*}
        with probability $\geq 1 - e^{-\Omega(\mu_i(S))}$.
        This implies, via (\ref{expr:hatCmuUpperBound}), 
        that
        \begin{align*}
            |\SEst_{i+1}|
            \leq \mu_i(S) \cdot (1 + p\mu_{i-1}(S)^{2/3})
            \cdot (1 + O(\mu_i(S)^{-1/3})).
        \end{align*}
        By concentration of $|S_{i+1}|$, with probability
        at least $1 - e^{-\Omega(\mu_i(S))}$,
        this is upper bounded as follows:
        \begin{align*}
            |\SEst_{i+1}|
            \leq |S_{i+1}| (1 + O(|S_{i+1}|^{-1/2+const})) (1 + p\mu_{i-1}(S)^{2/3})(1 + O(\mu_{i}(S)^{-1/3})).
        \end{align*}
        Now, we can see from (\ref{expr:pnPowerAsymptotics})
        that this is equal to the desired upper bound.
        We have thus shown (\ref{expr:C_hatCUpperBound}).
        \end{itemize}

        Now, with the preliminary inequalities proven, we
        proceed to prove the proposition.
        We split into two cases:
        
        \begin{itemize}
            \item
                \textbf{$S_{i+1}$ begins before $\SEst_{i+1}$
                (in other words, $|R(S, i)| < |R(\SEst, i)|$).}
                
                In this case, we will show (i) that $S_{i+1}$ must end before $\SEst_{i+1}$ (i.e., that $|R(S, i+1)| \leq |R(\SEst, i+1)|$) with high
                probability, 
                (ii) that
                \begin{align}
                    \Delta R_{i+1}
                    = 0, %
                \end{align}
                and 
                (iii) that
                \begin{align}
                    |\SEst_{i+1} \symmdiff S_{i+1}|
                    \leq |\Delta R_i|. %
                \end{align}

                To show that (i) is true, we note that because $S_{i+1}$ begins before
                $\SEst_{i+1}$, 
                $S_{i+1}$ consists of an initial segment $\hat{S}_{j_1}, \hat{S}_{j_1+1}, ..., \hat{S}_{j_2}$  with total cardinality 
                $|\Delta R_i|$, ending in an observation endpoint (specifically, the one corresponding to $R(\SEst, i)$), 
                followed by a segment $\hat{S}_{j_2+1}, ..., \hat{S}_{j_3}$ of total c
                ardinality $|S_{i+1}| - |\Delta R_i|$,
                again ending in an observation endpoint.  This is true
                by definition of $\Delta R_i$.  The second segment begins
                at the same point as $\SEst_{i+1}$ (that is, $R(\SEst, i) = \union_{j=0}^{j_2+1} \hat{S}_j$), and we know
                that it has cardinality
                \begin{align}
                    |S_{i+1}| - |\Delta R_i|
                    \leq |S_{i+1}|
                    \leq \mu_{i}(S)(1 + \mu_i(S)^{-1/2+const})
                    \leq \mu_{i}(\SEst)(1 + \mu_{i}(\SEst)^{-1/3}),
                \end{align}
                by concentration of $|S_{i+1}|$.  The last inequality follows from the fact that $\mu_i(S)
                = \Theta(\mu_i(\SEst))$.
                Thus, the second segment of $S_{i+1}$ must be contained
                in $\SEst_{i+1}$, by (\ref{expr:hatCmuUpperBound}), by definition of the FastClock algorithm, as desired.
                
                This has the following implication: we can express
                $|\Delta R_{i+1}|$ as 
                \begin{align}
                    |\Delta R_{i+1}|
                    = |\SEst_{i+1}| - (|S_{i+1}| - |\Delta R_{i}|)
                    \leq \mu_i(S)^{-1/3}  + |\Delta R_{i}|.
                \end{align}
                We have used (\ref{expr:C_hatCUpperBound}).
                Since, by the inductive
                hypothesis, we have $|\Delta R_i| = o(\mu_{i-1}(C)^{0.66})$,
                and since this is $o(|C_{i+1}|)$, we have that 
                \begin{align}
                    |\Delta R_{i+1}|
                    = 0,
                \end{align}
                by Assumption~\ref{asmpt:observation-rate} that no observation interval has too
                few vertices.  This follows because, if $\Delta R_{i+1}$ were nonempty, then
                it would contain an observation interval (i.e., $\hat{S}_j$ for some $j$) with cardinality at most $o(|S_{i+1}|)$ that is a subset of $S_{i+1}$.  This contradicts Assumption~\ref{asmpt:observation-rate}.  Thus, we have established (ii).

                We next show (iii).  We have
                \begin{align}
                    |\SEst_{i+1} \symmdiff S_{i+1}|
                    \leq |\Delta R_i|,
                \end{align}
                by the fact that  
                $|\SEst_{i+1} \symmdiff S_{i+1}|
                = |\Delta R_i| + |\Delta R_{i+1}|$.

            \item
                \textbf{Or $S_{i+1}$ begins after or at the same time as $\SEst_{i+1}$
                (in other words, $|R(S, i)| \geq |R(\SEst, i)|$).}

                In this case, we will show (i) that
                \begin{align}
                    |\Delta R_{i+1}| = 0,
                \end{align}
                and (ii) that
                \begin{align}
                    |\SEst_{i+1} \symmdiff S_{i+1}|
                    \leq |\Delta R_{i}|.
                \end{align}
                
                This is because of the following identity:
                \begin{align}
                    |\SEst_{i+1}|
                    = |\Delta R_i| + |S_{i+1}| + |\Delta R_{i+1}| I_{i+1},
                    \label{expr:special-identity}
                \end{align}
                where 
                \begin{align}
                    I_{i+1}
                    = \begin{cases}
                        1 & \text{ $S_{i+1}$ stops before $\SEst_{i+1}$} \\
                        -1 & \text{otherwise}
                    \end{cases}
                \end{align}
                This is a consequence of the following derivation,
                which relies on the definitions of all involved
                terms.
                \begin{align*}
                    |\Delta R_i| + |S_{i+1}| + |\Delta R_{i+1}| I_{i+1}
                    &= \sum_{k=0}^{i} |S_k| - \sum_{k=0}^i |\SEst_k|
                    + |S_{i+1}| + \left|\sum_{k=0}^{i+1} |S_k| - \sum_{k=0}^{i+1} |\SEst_k|\right| I_{i+1} \\
                    &= \sum_{k=0}^{i+1} |S_{k}| - \sum_{k=0}^i |\SEst_k|  - \left(\sum_{k=0}^{i+1} |S_k| - \sum_{k=0}^{i+1} |\SEst_k|\right) \\
                    &= |\SEst_{i+1}|.
                \end{align*}

                Rearranging (\ref{expr:special-identity}) to solve for $|\Delta R_{i+1}|$, we have that
                \begin{align*}
                    |\Delta R_{i+1}|
                    &= | |\SEst_{i+1}| - |\Delta R_{i}| - |S_{i+1}| | \\
                    &\leq | |\SEst_{i+1}| - |S_{i+1}| | + |\Delta R_{i}| \\
                    &= | |\SEst_{i+1}| - |S_{i+1}| | + o(\mu_{i-1}(S)^{2/3}) \\
                    &\leq O(\mu_{i}(S)^{-1/3}) + o(\mu_{i-1}(S)^{2/3}) \\
                    &= o(\mu_i(S)).
                \end{align*}
                Here, we have used the triangle inequality and the
                inductive hypothesis on $|\Delta R_{i}|$, followed
                by the inequality (\ref{expr:C_hatCUpperBound}).
                
                Since $|\Delta R_{i+1}| = o(\mu_{i}(C))$, it must be $0$ because of Assumption~\ref{asmpt:observation-rate}, which
                verifies the inductive hypothesis on $|\Delta R_{i+1}|$.  
                
                Furthermore, this implies that
                \begin{align}
                    |\SEst_{i+1} \symmdiff S_{i+1}|
                    \leq |\Delta R_{i}|,
                \end{align}
                which verifies the inductive hypothesis on $|\SEst_{i+1} \symmdiff S_{i+1}|$.
        \end{itemize}
        
        The inductive hypotheses follow directly from
        the above.
    \end{proof}

We can now prove the utility theorem, Theorem~\ref{thm:fastclock-utility}.
\begin{proof}[Proof of Theorem~\ref{thm:fastclock-utility}]
    Let $B_i$ denote the \emph{bad} event that either inductive hypothesis
    fails to hold at step $i$.  We will lower bound
    \begin{align}
        \Pr[ \intersect_{i=0}^{T-1} \neg B_i].
    \end{align}
    By the chain rule, we have
    \begin{align}
        \Pr[ \intersect_{i=0}^{T-1} \neg B_i]
        = \Pr[\neg B_0] \prod_{i=1}^{T-1} \Pr[ \neg B_i ~|~ \intersect_{j=0}^{i-1} \neg B_j].
    \end{align}
    
    From Proposition~\ref{prop:inductive-step}, 
    Proposition~\ref{prop:base-case}, and Lemma~\ref{lemma:mu_i-growth}, 
    this is lower bounded by
    \begin{align}
        \prod_{i=1}^{T-1} (1 - e^{-D \cdot (np)^{i+1}})
        &= \exp\left( \sum_{i=1}^{T-1}\log\left( 1 - e^{-D(np)^{i+1}} \right)\right) \\
        &= \exp\left(  -\sum_{i=1}^{T-1}e^{-D(np)^{i+1}}\cdot (1 + o(1))  \right) \\
        &= 1 - e^{-\Omega(np)}.
    \end{align}
    
    Now, the event that none of the bad events hold implies the claim,
    which completes the proof.
\end{proof}

With Theorem~\ref{thm:fastclock-utility} in hand, we can
prove the main result, Theorem~\ref{thm:fastclock-approx-main}.
\begin{proof}[Proof of Theorem~\ref{thm:fastclock-approx-main}]
    Let us recall the definition of $d_{\hat{S}}(C, \hat{C})$.  We have
    \begin{align}
        d_{\hat{S}}(C, \hat{C})
        = \frac{1}{{n\choose 2}} \sum_{i < j} \Dis_{C,\hat{C}}(i,j).
    \end{align}
    What we need is an upper bound on this quantity in terms of the error term
    $f(n) = (np)^{-1/3}$ in Theorem~\ref{thm:fastclock-utility}.
    To this end, we partition the
    sum according to vertex membership in clock intervals as follows:
    \begin{align}
        {n\choose 2} d_{\hat{S}}(C, \hat{C})
        =
        \sum_{k_1=1}^{|S|} \sum_{i<j \in S_{k_1}} \Dis_{C,\hat{C}}(i,j) 
        + \sum_{k_1=1}^{|S|} \sum_{k_2=k_1+1}^{|S|} \sum_{i \in S_{k_1}, j \in S_{k_2}} \Dis_{C,\hat{C}}(i,j).
    \end{align}
    In the first sum, $i$ and $j$ are \emph{not} ordered by $C$, because they lie in the
    same set in $S$.  We consider the corresponding set in $\SEst$.
    From the theorem, at least ${|C_{k_1}|\cdot (1-f(n))\choose 2}$ vertex pairs from
    $S_{k_1}$ are correctly placed together in $\SEst$.  
    Furthermore, at least
    \begin{align}
        |S_{k_1}| \cdot (1 - f(n)) \cdot \sum_{k_2=k_1+1}^{|S|} (1-f(n))|S_{k_2}| %
    \end{align}
    pairs of vertices with one vertex in $S_{k_1}$ are correctly placed in different intervals.
    So the number of correctly ordered/unordered vertex pairs is at least
    \begin{align}
        \sum_{k_1=1}^{|S|} \left( \frac{|S_{k_1}|^2 \cdot (1 - f(n))^2}{2} + \sum_{k_2=k_1+1}^{|S|} |S_{k_1}||S_{k_2}| (1 - f(n))^2 \right)
        \sim {n\choose 2} \cdot (1 - f(n))^2.
    \end{align}
    Since $f(n) = o(1)$, this is asymptotically equal to 
    ${n\choose 2}\cdot (1 - 2f(n))$.
    
    This completes the proof.
\end{proof}

\subsection{Proof of Theorem~\ref{thm:fastclock-running-time}}
We analyze the worst-case running time of FastClock as follows:
initialization takes $O(1)$ time.  The dominant contribution to the running time is
the \emph{while} loop.  Since $t_{obs}$ is initially $0$ and increases by at least
$1$ in each iteration, the total number of iterations is at most $N$.  The remaining
analysis involves showing that each vertex and edge is only processed, a constant number of 
times, in $O(1)$ of these loop iterations, so that the running time is at most 
$O(N + n + m)$, as claimed.

In particular, the calculation of $\mu_t$ in every step involves a summation over all
edges from currently active vertices to their uninfected neighbors, along with a calculation involving the current number of uninfected vertices (which we can keep track of using $O(1)$
calculations per iteration of the loop).  A vertex is only active
in a single iteration of the loop.  Thus, each of these edges is only processed once in
this step.
The calculation of $t'_{obs}$ entails calculating a sum over elements of $\hat{S}$
that are only processed once in all of the iterations of the loop.  The calculation of
all of the $|\hat{S}_i|$ can be done as a preprocessing step via an iteration over all 
$n$ vertices of $G$.  Finally, the calculation of $\Frontier_{t+1}$ entails a union
over the same set of elements of $\hat{S}$ as in the calculation of the maximum, followed
by a traversal of all edges incident on elements of $\SEst_{t+1}$ whose other ends connect
to uninfected vertices.  These operations involve
processing the vertices in $\SEst_{t+1}$ (which happens only in a single iteration of
the loop, and, thus, with the preprocessing step of calculating the $|\hat{S}|_i|$, only
a constant number of times in the entire algorithm).  The edges leading to elements of
$\Frontier_{t+1}$ from elements of $\SEst_{t+1}$ are traversed at most twice in the loop:
once in the building of $\Frontier_{t+1}$ and once in the next iteration in the calculation
of $\mu_t$.

This implies that each vertex and edge is only processed $O(1)$ times in the entire algorithm.
This leads to the claimed running time of $O(N+n+m)$, which completes the proof.

\end{document}